\newtheorem{problem}{Problem}
\newcommand{\method}{\textit{ARRANGE}\xspace}
\newcommand{\methodExt}{\textit{d\textbf{A}ta d\textbf{R}i\-ven me\-thod fo\textbf{R} \textbf{A}s\-sessing and re\-du\-ci\textbf{NG} par\-ty frag\-m\textbf{E}n\-ta\-tion}\xspace}
\newcommand{\statusquo}{\textit{status quo}\xspace}
\newcommand{\randomsq}{\textit{random-sq}\xspace}
\newcommand{\randomdelta}{\textit{random-}$\delta$\xspace}
\newcommand{\pedro}[1]{\textcolor{black}{#1}}
\newcommand{\pedrob}[1]{\textcolor{black}{#1}}
\newcommand{\pedror}[1]{\textcolor{black}{#1}}
\begin{document}

\title{How Many Political Parties Should Brazil Have?  \\A Data-driven Method to Assess and Reduce Fragmentation \\in Multi-Party Political Systems}
\author{Pedro O.S. Vaz de Melo
\affil{Universidade Federal de Minas Gerais}
}

\begin{abstract}
In June 2013, Brazil faced the largest and most significant mass protests in a generation. These were exacerbated by the population’s disenchantment towards its highly fragmented party system, which is composed by a very large number of political parties. Under these circumstances, presidents are constrained by informal coalition governments, bringing very harmful consequences to the country. In this work I propose \method, a \methodExt in a country. \method uses as input the roll call data for congress votes on bills and amendments as a proxy for political preferences and ideology. With that, \method finds the minimum number of parties required to house all congressmen without decreasing party discipline. When applied to Brazil’s historical roll call data, \method was able to generate \pedror{$23$} distinct configurations that, compared with the \statusquo, have (i) a significant smaller number of parties, (ii) a higher discipline of partisans towards their parties and (iii) a more even distribution of partisans into parties. \method is fast and parsimonious, relying on a single, intuitive parameter.
\end{abstract}


\maketitle

\section{Introduction}

In June 2013, Brazil faced the largest and most significant mass protests in a generation, comparable in size to the protests that triggered the collapse of the military dictatorship in 1984 ~\cite{Saad-Filho2013}. The 2013 protests had been exacerbated  by the broader disenchantment of the population towards the party system in Brazil~\cite{Mische2013}. Banners with sentences such as ``no party represents me'' or ``we don't have a party,  we are Brazil!'' were commonly seen among the protesters. In response to these protests, the government proposed a program of political reform~\cite{Saad-Filho2013}. However, more than one year has passed and very little has been done.

Why are people so unhappy with the Brazilian party system? To illustrate its incapability, consider the following examples. 19 months before Rio de Janeiro stages South America's first Olympic games, an Evangelical pastor without any link to sports was nominated as Brazil's new sports minister. He replaced communist Aldo Rebelo, who oversaw preparations for the World Cup and was highly criticized for subsidizing the construction of white-elephant football stadiums~\cite{DOWNIE2014}. Aldo Rebelo, who in 1994 proposed a bill that prohibits the adoption of any technological innovation in local, state and federal agencies, is, ironically, Brazil's new minister of science and technology. Moreover, in 2010 elections, Tiririca, a well known entertainer whose career began as a circus clown in Brazil, was first elected to represent S\~{a}o Paulo in Congress, winning the most votes of any candidate in the country with the slogans ``It can't get any worse'' and ``What does a Congressman do? In fact, I do not know, but vote for me and I will tell you''.

One of the main causes of  Brazil's political inefficiency is its highly fragmented party system~\cite{mainwaring1999rethinking}. This is a system with many political parties and with no one party being able to obtain an absolute majority in the representative assembly. The more fragmented the party system is, the less likely it is that the president's party will control a majority of seats in the legislature. Therefore, presidents are usually forming informal coalition governments, needing to build cross-party coalitions to implement most major policies~\cite{mainwaring1997presidentialism}. Under these circumstances, many (if not most) deputies spend the bulk of their time arranging jobs and pork-barrel projects for their constituents in exchange for legislative support~\cite{ames2009deadlock}. Also, parties rarely organize around national-level questions, which means that Congress rarely deals with serious social and economic issues~\cite{ames2009deadlock}. As a consequence, individualism, clientelism, and personalism, rather than programmatic appeals, dominate electoral campaigns~\cite{ames2009deadlock}. More generally, party system fragmentation impacts the electoral dynamic, the process of coalition formation, governing, and ultimately, the survival of political systems in presidential democracies~\cite{Bohn2014}. In Brazil, party system fragmentation has reached one of the highest levels ever found in the world~\cite{Limongi2009,Anckar2000}. After 2014 elections, the number of parties represented in Congress grew from 23 to 28.

Besides Brazil, many countries have party systems with high levels of fragmentation, such as  Bolivia, Bulgaria, Denmark, Ecuador, Finland, France, Guatemala, India, Israel, Italy, Netherlands and Thailand~\cite{Anckar2000}. In theory, the number of parties of a country can be explained by its electoral and social structures~\cite{Colomer2005,Neto1997,TAAGEPERA2006}. With regard to the electoral structure, while plurality elections favor two-party competition, proportional representation (PR) electoral systems create fragmented party systems~\cite{duverger1963political,mainwaring1999rethinking}. Concerning the social structures,  the more socially heterogeneous a country is, the more electoral parties it will have~\cite{lipset1967party,Stoll2007}. Social heterogeneity is measured either by the number of linearly independent ideological dimensions (e.g. religious and socio-economic) being discussed in the society or the number of social cleavages (e.g. centre-periphery and state-church) a country has~\cite{Colomer2005,Stoll2007}. Thus, is it possible to measure if a fragmented party system is a reflex of a socially heterogeneous society? In this direction, how can we determine whether an electoral system is optimally fragmented? And if it is not optimally fragmented, how can we optimize it?

To answer these questions, I propose \method, a \methodExt in a country. \pedro{Inspired by the broad spectrum and advances of data analysis methods~\cite{Wu2014a,Silva2014c,Silva2014b,ICWSM1510570,Zhang2011,Yin2014,DeVries2014,Shor2011,poole2011ideology,Clinton2004,Sobkowicz2012,Pennacchiotti2011,ICWSM148073,Gao2015a,Newman2011}}, \method uses as input the roll call data, i.e., the votes given by congressmen on bills and amendments, as a proxy for political preferences and ideology. The idea, regularly employed by political scientists~\cite{poole2011ideology,Shor2011}, is that congressmen who give the same vote regularly share the same political ideology and, therefore, should belong to the same party. Using this insight, \method reorganizes the party system of a given country by trying to find the minimum amount of social cleavages that divides its congressmen into coherent voting blocks, i.e., sets of congressmen whose members votes similarly. These coherent voting blocks would be the new political parties of the analyzed country and would allow us to assess its actual level of fragmentation. If \method divides the congressmen into a much lower number of political parties than the actual number, then it is possible to conclude that party fragmentation in this particular country is much higher than it should be. If this is the case, \method immediately provides a new congressmen--party configuration that both reduces the country's level of party fragmentation and increases intra-party similarity, which could potentially increase  the efficiency of the party system~\cite{Limongi2009,mainwaring1999rethinking,mainwaring1997presidentialism,ames2009deadlock}. \pedro{To the best of my knowledge, this is the first work that proposes a data-driven method to assess and potentially reduce the number of parties in political systems.} In summary, the main contributions of this paper are:

\pedro{
\begin{enumerate}
	\item \textbf{A new problem is addressed}: what is the minimum number of political parties a country should
have given the roll votes of its congressmen? Again, although party fragmentation has been extensively studied in
the literature, to the best of my knowledge, this is the first time roll call data has been used to assess it.
\item \textbf{\method}, a fast and parsimonious method that receives as input roll call data and
outputs a new party system configuration that potentially reduces its actual level of fragmentation.
\item \textbf{It is shown that Brazil has and had many ideologically redundant parties}, i.e., parties that are similar
in the ideological space. Thus, if today Brazil has one of the highest levels of party system fragmentation in
the world (more than 20 parties), this work proves it can be much lower (down to \pedror{5} parties).
\end{enumerate}
}

\section{Fundamentals and Related Work}

\pedro{The constant advancement of information systems allows, at a growing rate, more data to be stored and generated from the most diverse situations. It is fascinating that, behind all these data, we see the reflection of the environment itself. In order to find knowledge in this invaluable evolving database, a growing number of data-driven methods are being proposed along various research areas. For instance, there are data-driven methods to predict hospital mortality from instance-based patient data~\cite{Wu2014a} and flu epidemics~\cite{Lazer2014}. In the social sciences, Silva et. al. proposed two data-driven methods to quantitatively characterize cultural behaviors of geographical regions\cite{Silva2014c,Silva2014b} and Park et. al.~\cite{ICWSM1510570} designed and evaluated a measure that captures diversity of musical tastes from social media data. In economics, \cite{DeVries2014} proposed a data-driven approach to understand online consumer behavior and engagement with brands. For the benefit of the industry sector, there are data-driven methods to monitor industrial processes~\cite{Yin2014} and to assist the development and deployment of intelligent transportation systems~\cite{Zhang2011}.}
\pedro{The constant advancement of information systems allows, at a growing rate, more data to be stored and generated from the most diverse situations. It is fascinating that, behind all these data, we see the reflection of the environment itself. In order to find knowledge in this invaluable evolving database, a growing number of data-driven methods are being proposed along various research areas. For instance, there are data-driven methods to predict hospital mortality from instance-based patient data~\cite{Wu2014a} and flu epidemics~\cite{Lazer2014}. In the social sciences, Silva et. al. proposed two data-driven methods to quantitatively characterize cultural behaviors of geographical regions\cite{Silva2014c,Silva2014b} and Park et. al.~\cite{ICWSM1510570} designed and evaluated a measure that captures diversity of musical tastes from social media data. In economics, \cite{DeVries2014} proposed a data-driven approach to understand online consumer behavior and engagement with brands. For the benefit of the industry sector, there are data-driven methods to monitor industrial processes~\cite{Yin2014} and to assist the development and deployment of intelligent transportation systems~\cite{Zhang2011}.}

\pedro{In the political sciences, data analysis methods} from roll votes primarily focuses on the estimation of cleavages and ideologies across congressmen and parties~\cite{Poole1991,Shor2011,poole2011ideology} to characterize and predict legislative behavior~\cite{Snyder2001,Clinton2003,Clinton2004}. Recently, with the advancement of political weblogs and online social networks, researchers are also extracting political knowledge from user generated data on the Web. There are studies that focus on mining political opinions~\cite{Sobkowicz2012} and profiles~\cite{Pennacchiotti2011} from the texts users post on social media applications. Others, such as~\cite{Fang2012}, extract political opinions from general texts, such as statement records of U.S. senators and online news. More recently, Leman Akoglu~\cite{ICWSM148073} classified the political polarity of individuals using roll call votes of U.S. congressmen and texts posted on political forums. The idea behind all these studies is that political preferences tend to be stable over time and can be predicted accurately.

\pedro{Still in the political sciences, the study of party systems is one of its largest sub-fields~\cite{Neto1997}}. Within this sub-field, since Duverger's seminal paper~\cite{duverger1963political}, many studies focused on predicting and understanding the factors that determine the number of parties that compete in a given polity~\cite{Riker76thenumber,TAAGEPERA2006,Samuels2008,Bohn2014,Anckar2000}. In summary, there are two lines of thought: one that emphasizes the role of electoral laws in structuring coalition incentives, and another that emphasizes the importance of preexisting social cleavages. Another fundamental problem in this sub-field is to count the number of parties by taking into account their relative size~\cite{LaaksoTaagepera79}. If, for instance, a party has a very small percentage of seats in Congress (e.g. one seat of one thousand), then it should be counted accordingly. The metric that considers this is called \textit{the effective number of parties}. Conceptually, the effective number of parties is simply the number of ``viable'' or ``important'' political parties in a party system that includes parties of unequal sizes. Since Laakso and Taagepera's seminal work~\cite{LaaksoTaagepera79}, several ways of computing the effective number of parties were proposed~\cite{Molinar1991,Dunleavy2003,Golosov2009}. The number of effective parties is a frequent metric for assessing party system fragmentation in a country~\cite{Anckar2000}.

The high interest in these problems comes from the fact that the actual number of parties usually determines the number of \textit{effective} parties, or how \textit{fragmented} a party system is~\cite{LaaksoTaagepera79}. Highly fragmented party systems can affect governance drastically~\cite{Raile2010}. The more fragmented the party system is, the less likely it is that the president's party will control a majority of seats in the legislature. Simone Bohn~\cite{Bohn2014} reviewed the literature and concluded that party system fragmentation impacts the electoral dynamic, the process of coalition formation, governing, and ultimately, the survival of political systems in presidential democracies. Thus, in this paper, we measure party fragmentation by counting both the actual and the effective number of parties.

Another crucial factor for governance is party discipline, i.e., the ability of a political party to get its members to support the policies of their party leadership. Mainwaring and Shugart~\cite{mainwaring1997presidentialism} assessed the effects of this on the costs of governing. If parties are not disciplined, presidents will be forced to rely on ad-hoc coalitions based on the distribution of patronage to individual legislators, which raises the costs of governing and reduces policy coherence. Limongi and Figueiredo~\cite{Limongi2009} argued that ``institutional engineering'' should focus on electoral formulas that reduce party fragmentation and increase party discipline. Brazil is a special case in politics for its high level of party fragmentation, being consistently analyzed in the literature~\cite{Desposato2006,Samuels2008,Limongi2009,ames2009deadlock,Raile2010,Bohn2014}. Thus, using Brazil as the use case makes this work specially challenging but rewarding.

\pedro{Finally, it is important to emphasize that this work differs significantly from those that focus on algorithms to find communities in networks~\cite{Newman2011,Fortunato2010}. Although such algorithms could be applied here to detect communities of congressmen that are ideologically similar and, therefore, could compose a political party, this is very different from our problem in two major aspects. First, here our goal is to find the minimum number of communities (in our case, political parties), which is an optimization problem not addressed by community detection algorithms, which usually aim to maximize modularity~\cite{Newman2004} or any other cohesion metric~\cite{Aldecoa2011}. Second, while traditional community detection algorithms do not allow two disjoint subgraphs to be part of the same community, our major constraint here is party discipline. Thus, here we allow two disjoint subgraphs (in our case, two ideologically dissimilar groups of congressmen) to be part of the same community if the party discipline constraint is satisfied. The comparison between the communities (political parties) generated by the proposed method and by the state of the art community detection algorithms is left for future work.}

\section{Data Description}
\label{sec:data}

All the data used in this work was collected from the Open Data (\textit{Dados Abertos}) project of the House of Representatives (or Chamber of Deputies) of Brazil. In total, I collected \pedror{$744,195$ thousand} roll votes on \pedror{$774$} bills of $1,582$ thousand congressman that worked in the House of Representatives of Brazil from November, 4th, 1998 to December, 3rd, 2014. The reason for this particular time interval is related to the purpose of this work. Since party discipline is a fundamental metric of evaluation, I only collected the bills in which the party leaders declared the desired vote for their fellow partisans. More than \pedror{$95\%$} of the votes given during this period had a declared party leader vote. Moreover, note that congressmen vote for bills and their amendments. An amendment is a proposition presented as ancillary to the bill, to amend its form or content. Thus, bills and amendments compose a total of \pedror{$2,162$} thousand propositions to be voted by the congressmen. Each congressman may or may not agree with the vote of his/her party leader. There are, in total, \pedror{$35,216$} thousand declared votes of the leaders of the 36 parties that had congressmen elected for the House of Representatives during the analyzed period. 
 
\section{Formal Definitions}

As discussed previously, reducing party fragmentation only makes sense if party discipline does not decrease significantly. One way to measure the discipline of a congressman is to compute the fraction of votes given by him/her through his political life that agreed with his/her party leader. However, in Brazil it is common for congressmen to switch parties. Also, it is well known that some parties demand (or inspire) higher levels of discipline than others~\cite{Limongi2009}. Thus, instead of analyzing the discipline of the whole political life of a congressman, I will analyze his/her discipline as a member of a single party, i.e., a congressman may have different levels of discipline if he/she was member of different parties during his/her political life. For simplicity, from now on I will assume that the vote given by the leader of the party was given by the party itself, e.g., I will call \textit{party vote} the vote given by the leader of the party.

Thus, given the set of $m$ congressmen $\mathcal{U} = \{u_1, u_2, ..., u_m\}$ and the $N$ political parties that compose the set $\mathcal{P} = \{p_1, p_2, ... p_N\}$, I define a \textbf{partisan} $a:=(u,p)$ as the tuple formed by a congressman $u$ and a political party $p$. The set containing all $M$ partisans is defined as $\mathcal{A} = \{a_1, a_2, ... a_M\}$. These partisans' job is to vote for the set $\mathcal{B} = \{b_1, b_2, ..., b_n\}$  of $n$ bills and amendments that were put to vote in the House of Representatives during the analyzed period. From now on, I will use the term \textbf{propositions} to refer to both bills and amendments. Since a partisan $a$ had not necessarily voted for all propositions, I define the set $\mathcal{B}_a \subseteq \mathcal{B}$ as the set of propositions that were voted by partisan $a$. I also define $\mathcal{A}_p$ as the set of partisans which are members of party $p$, i.e., $\mathcal{A}_p = \{a_i | a_i := (u,p') \wedge p'=p\}$.

Before the partisans give their votes for a given proposition $b$, their parties have to announce their votes for $b$. Again, since a party $p$ had not necessarily voted for all propositions, I define the set $\mathcal{B}_p \subseteq \mathcal{B}$ as the set of propositions that were voted by party $p$. For each proposition $b \in \mathcal{B}_p$ of a given party $p$, there is a vote $v_p^b$ associated with it. In the same way, for each proposition $b' \in \mathcal{B}_a$ of a given partisan $a$, there is a vote $v_a^{b'}$ associated with it. Thus, a party $p$ and a partisan $a$ have, respectively, a set of votes $\mathcal{V}_p$ and $\mathcal{V}_a$, where $|\mathcal{V}_p| = |\mathcal{B}_p|$ and $|\mathcal{V}_a| = |\mathcal{B}_a|$. The set of all votes given by partisans and parties is simply $\mathcal{V}$.

A vote $v_p^b$ given by party $p$ on a proposition $b$ may be of four types: \textit{Y} (yes), \textit{N} (no), \textit{O} (obstruction) and \textit{F} (free), i.e., $v_p^b \in \{Y, N, O, F\}$. If the vote is $Y$ ($N$), the party approves (disapproves) the proposition. If the vote is $O$, the party is trying to avoid the vote on the proposition, i.e., its partisans are called to withdraw from the plenary. Finally, if the vote is $F$, its partisans are free to vote at will. Similarly, a partisan $a$ vote $v_a^b$ on a proposition $b$ may be of three types: \textit{Y}, \textit{N} and \textit{O}, i.e., $v_a^b \in \{Y, N, O\}$. If the vote is $Y$ ($N$), the partisan approves (disapproves) the proposition. If the vote is $O$, the partisan withdrew from plenary.

In this work, the vote is the fundamental feature that determines a preference or ideology. Since for all propositions in our dataset we have both the vote of the partisan and the party, here I define a general function $agrees(v_1, v_2)$ that receives two votes as input and outputs $1$ if the votes are in accordance or $0$ otherwise. This function is defined as: 

\begin{equation}
agrees(v_1, v_2) = \begin{cases}
1 & \text{ if } v_1 = v_2 \text{ or } v_1=F \text{ or } v_2=F \\ 
0 & \text{ otherwise. } 
\end{cases} \label{eq:voteagrees}
\end{equation}

Note that both $v1$ and $v2$ can be a vote of a partisan or a party. Also note that a $F$ vote implies accordance, since the party that gave that vote does not particularly care about its members’ votes. Once we know how to compare votes, we can propose a way to compare the similarity $sim(i,j)$ between two vote sets $V_i$ and $V_j$. It is given as: 
\begin{equation}
	sim(i,j) = {\sum_{b~\in~B_i \cap B_j } agrees(v_i^b, v_j^b)} \times {\left | B_i \cap B_j \right |}^{-1}. \label{eq:sim}
\end{equation}

In summary, $sim(i,j)$ sums all the votes in agreement between the vote sets $V_i$ and $V_j$ considering only the propositions that are in both sets. From Equation~\ref{eq:sim}, we can define the three levels of discipline that we will use throughout this paper. First, I define \textbf{partisan discipline} as the discipline $d_{a\rightarrow p}$ of a partisan $a:=(u,p)$ towards his/her party $p$, calculated as:
\begin{equation}
d_{a\rightarrow p} = sim(a,p) ~|~ a:=(u,p). \label{eq:partisandiscipline}
\end{equation}
Second, I define \textbf{party discipline} as the discipline $d_p$ throughout all the votes that were given by partisan members of the party $p$, calculated as: 
\begin{equation}
d_p = \frac{\sum_{a \in \mathcal{A}_p}{|\mathcal{B}_a|}\times d_{a\rightarrow p}}{\sum_{a \in \mathcal{A}_p}|\mathcal{B}_a|} \label{eq:partydiscipline}
\end{equation}
Finally, I define \textbf{overall discipline} as the discipline $d_*$ throughout all the votes given in the House of Representatives during the analyzed period, calculated as:
\begin{equation}
d_* = \frac{\sum_{[a:=(u,p)] \in \mathcal{A}}{|\mathcal{B}_a|}\times d_{a\rightarrow p}}{\sum_{[a:=(u,p)] \in \mathcal{A}}|\mathcal{B}_a|} \label{eq:overaldiscipline}
\end{equation}

\section{Politics in Brazil}

In this section I will show a summarized view of the fundamental characteristics of Brazilian party system that are relevant to the purpose of this work. \pedro{Because all parties in Brazil are often referred by their acronyms, I will also use their acronyms instead of their names. Thus, please refer to Table~\ref{tab:parties} for a list of all parties' names and their respective acronyms and sizes, in this case given by the size of their $\mathcal{B}_p$}s.  

\begin{table}
\tbl{Current and historical parties of Brazil.\label{tab:parties}}{
    \begin{tabular}{|c|c|c|}
\hline		
\textbf{Name} & \textbf{Acronym} & \textbf{$|\mathcal{B}_p|$}\\ \hline 
Partido do Movimento Democr\'{a}tico Brasileiro & PMDB & \pedror{2,163} \\ \hline 
Partido da Social Democracia Brasileira & PSDB  & \pedror{2,163}  \\ \hline 
Partido dos Trabalhadores & PT  & \pedror{2,163} \\ \hline 
Partido Democr\'{a}tico Trabalhista & PDT  & \pedror{2,152} \\ \hline 
Partido Socialista Brasileiro & PSB & \pedror{2,144}  \\ \hline 
Partido Trabalhista Brasileiro & PTB  & \pedror{2,144} \\ \hline 
Partido Popular Socialista & PPS  & \pedror{2,137}  \\ \hline 
Partido Comunista do Brasil & PCDOB  & \pedror{2,127} \\ \hline 
Partido Verde & PV & \pedror{1,798} \\ \hline 
Partido Progressista & PP  & \pedror{1,590} \\ \hline 
Partido Social Cristão & PSC  & \pedror{1,353}  \\ \hline 
Partido Socialismo e Liberdade & PSOL & \pedror{1,230} \\ \hline 
Partido da República & PR & \pedror{1,134} \\ \hline 
Partido da Frente Liberal & PFL & \pedror{1,081}   \\ \hline 
Democratas & DEM & \pedror{1,077} \\ \hline 
Partido Republicano Brasileiro & PRB  & \pedror{1,061} \\ \hline 
Partido Liberal & PL  & \pedror{1,001}   \\ \hline 
Partido da Mobilização Nacional & PMN & \pedror{981}  \\ \hline 
Partido Social Liberal & PSL & \pedror{865} \\ \hline 
Partido Humanista da Solidariedade & PHS & \pedror{662} \\ \hline 
Partido Trabalhista Cristão & PTC  & \pedror{606} \\ \hline 
Partido Progressista Brasileiro & PPB & \pedror{567} \\ \hline 
Partido Trabalhista do Brasil & PTDOB  & \pedror{471} \\ \hline 
Partido Social Democr\'{a}tico & PSD  & \pedror{466}  \\ \hline 
Partido Social Trabalhista & PST  & \pedror{464} \\ \hline 
Partido de Reedificação da Ordem Nacional & PRONA & \pedror{442} \\ \hline 
Partido Republicano Progressista & PRP  & \pedror{356} \\ \hline 
Partido Renovador Trabalhista Brasileiro & PRTB & \pedror{179} \\ \hline 
Partido Republicano da Ordem Social & PROS & \pedror{147} \\ \hline 
Solidariedade & SDD & \pedror{146} \\ \hline 
Partido Ecol\'{o}gico Nacional & PEN & \pedror{100} \\ \hline 
Partido Trabalhista Nacional & PTN & \pedror{92} \\ \hline 
Partido dos Aposentados da Na\c{c}\~{a}o & PAN & \pedror{75} \\ \hline 
Partido Social Democrata Crist\~{a}o & PSDC & \pedror{38}  \\ \hline 
Partido Republicano Brasileiro & PMR  & \pedror{33} \\ \hline 
Partido Socialista dos Trabalhadores Unificados & PSTU & \pedror{8}\\ \hline 
    \end{tabular}}
\end{table}

\pedro{First, in} Fig.~\ref{fig:partysize}, I show the historical participation of all parties in the House of Representatives during the analyzed period. \textit{Party participation}, which I will interchangeably call \textit{party size}, is represented by the total number of propositions that were voted by the members of the party (horizontal axis) and the total number of partisans that are and were members of the party (vertical axis). Observe the heterogeneity of this universe. From the three biggest parties (PMDB, PSDB and PT), with hundreds of partisans who voted for thousands of propositions, to the two smallest ones (PSTU and PMR), which together have only three partisans and \pedror{forty one} propositions, there are another $31$ parties with very distinct levels of representation.  

\begin{figure}[!hbt]
\centering
  {\includegraphics[width=.65\textwidth]{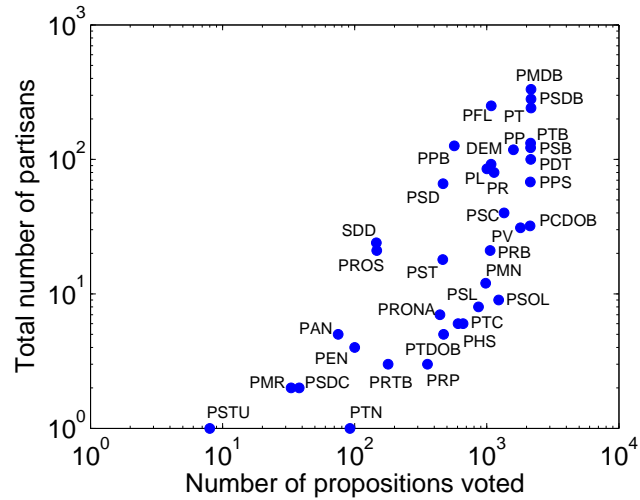}}
  \caption{Historical parties' size in Brazil.}
  \label{fig:partysize}
\end{figure}

In order to verify if there is any correlation between participation and party discipline, I show in Fig.~\ref{fig:partydiscipline}, for each party $p$, the total number of votes given by partisans members of $p$ (horizontal axis) and the party discipline $d_p$ (vertical axis). Note that there is no apparent relationship between party discipline and participation. In fact, the Pearson's correlation coefficient between party discipline and the total number of votes is \pedror{$0.24$}, but since the $p-value$ for testing the hypothesis of no correlation is \pedror{$0.28$}, we cannot affirm the correlation is significant. Nevertheless, as already observed by~\cite{Limongi2009} using a smaller dataset, party discipline in Brazil is consistently high: no party has a historical party discipline below $0.75$ and only \pedror{two} parties have a figure below $0.85$. 

\begin{figure}[!hbt]
\centering
  {\includegraphics[width=.65\textwidth]{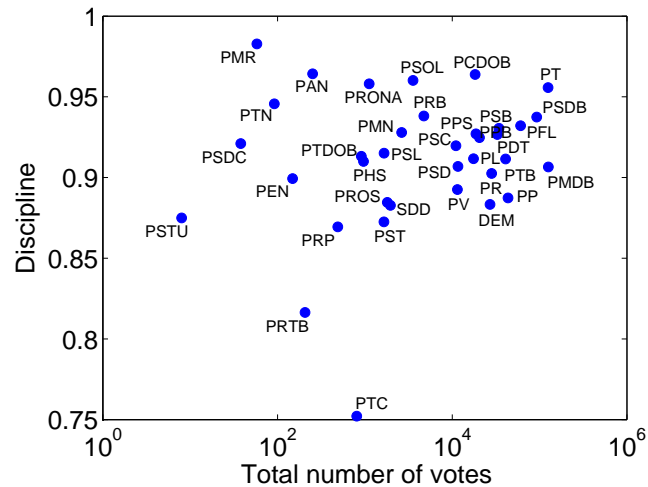}}
  \caption{Historical parties' discipline in Brazil.}
  \label{fig:partydiscipline}
\end{figure}

In Fig.~\ref{fig:partisandiscipline}, we show the behavior of partisans discipline in Brazil during the analyzed period. First, observe in Fig.~\ref{fig:partisandiscipline}a the Cumulative Distribution Function (CDF) of all partisans' disciplines. Note that the curve representing partisan discipline in Brazil is not very far away from the ideal curve, where all partisans have discipline of $1.0$. Thus, together with party discipline, partisan discipline is also usually high in Brazil: only \pedror{$6.3\%$} of partisans have discipline lower or equal to $0.8$. In Fig.~\ref{fig:partisandiscipline}b, we plot the heatmap of partisans using their disciplines and total number of votes. The color bar at the right indicates the number of partisans in a given area of the map. Observe that the vast majority of partisans are located in the upper-right of the heatmap, i.e., they have given many votes and have high partisan discipline. \pedror{The Pearson's correlation coefficient of $0.03$ and $p-value=0.23$ indicate that there is no correlation between discipline and participation.}

\begin{figure}[!hbt]
\centering
  \subfigure[CDF]
            {\includegraphics[width=.45\textwidth]{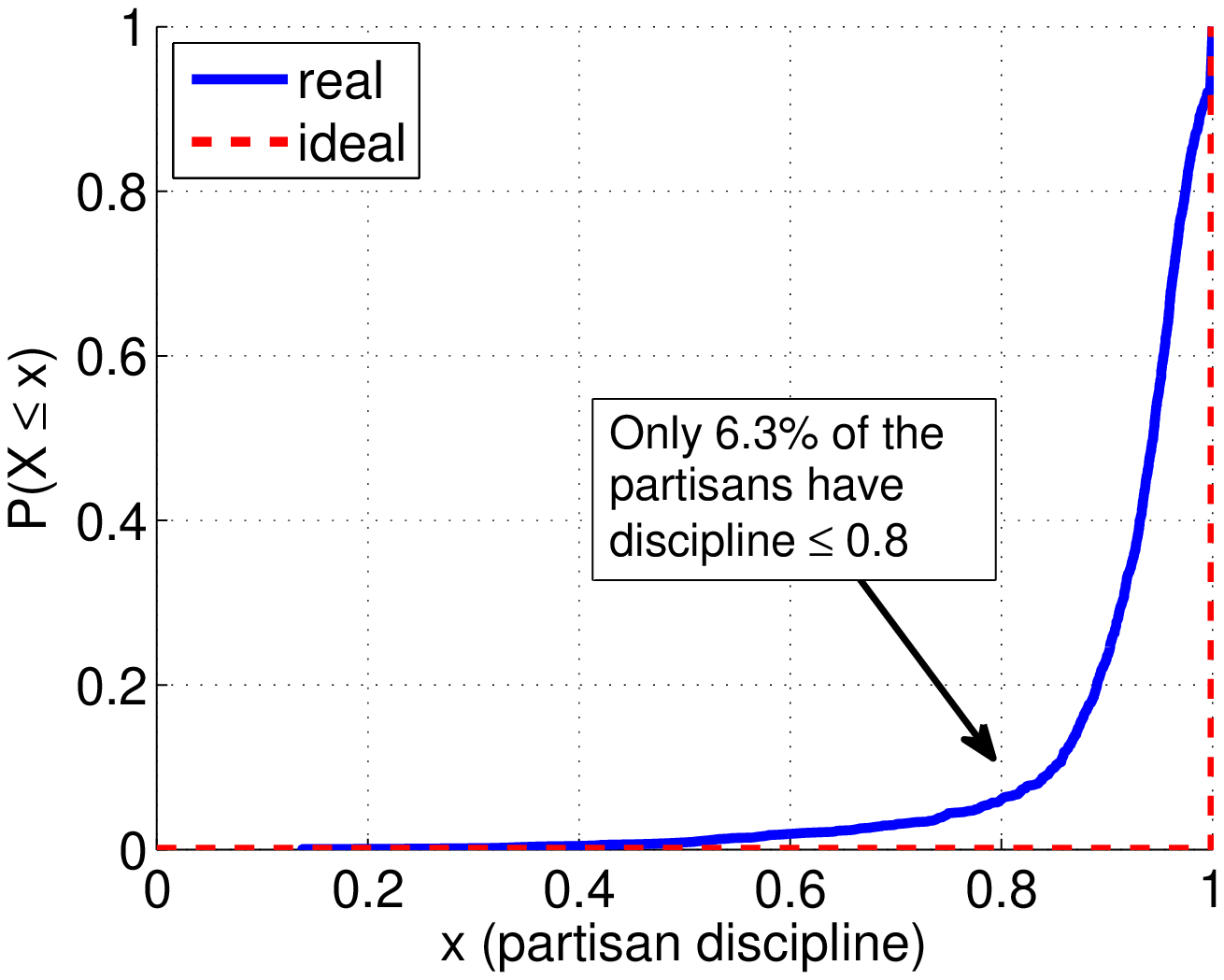}}
	\subfigure[Heatmap]
            {\includegraphics[width=.45\textwidth]{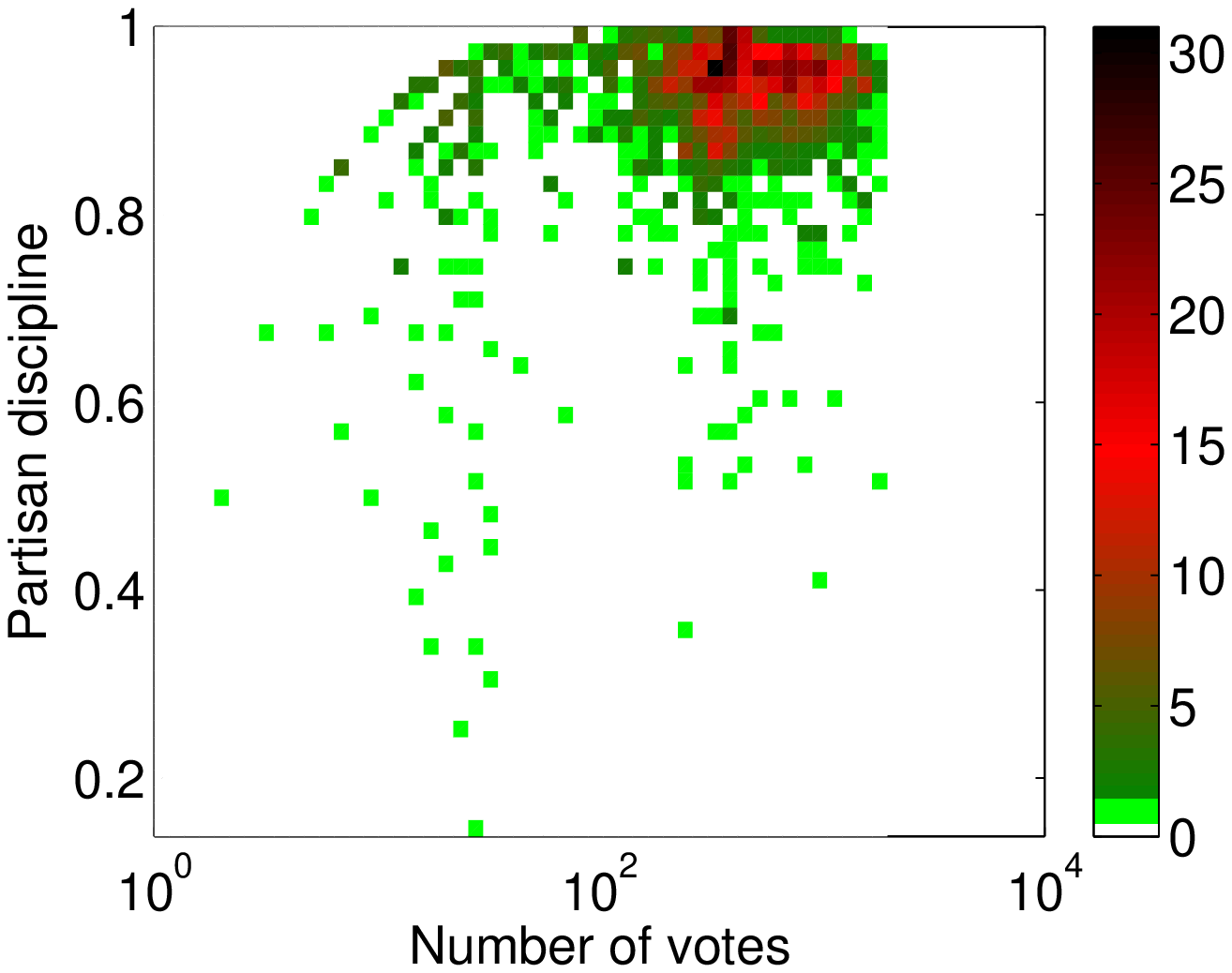}}
  \caption{Partisan's discipline in Brazil.}
  \label{fig:partisandiscipline}
\end{figure}

Although party fragmentation in Brazil has reached one of the highest levels ever found in the world~\cite{Limongi2009,Anckar2000}, we have seen that party and partisan discipline is consistently high. Does this mean that the level of party fragmentation in Brazil is necessary? According to the seminal \textit{doctrine of responsible party government}~\cite{ranney1975curing}, parties must differ sufficiently between themselves, providing the electorate with a proper range of choice between alternative actions. Given that, I reformulate the previous question: is the actual level of party fragmentation in Brazil a consequence of a high number of sufficiently different parties?

Instead of performing a deep clustering analysis to answer this question, I will apply the Principal Component Analysis (PCA)~\cite{pca} technique to the matrix $M_{\mathcal{V}_{BR}}$ composed by the voting vectors $\mathcal{V}_{a_i}$ of each partisan $a_i$. PCA is a widely used statistical technique for unsupervised dimension reduction. It transforms the data into a new coordinate system such that the greatest variance is achieved by projecting the data into the first coordinate, namely principal component, the second greatest variance is achieved by projecting into the second coordinate, the second component, and so on. In order to draw more interesting conclusions from this analysis, I will also apply PCA to the matrix $M_{\mathcal{V}_{US}}$ composed by the voting vectors of the U.S. congressmen. For this, I will use the same dataset used in~\cite{ICWSM148073}, which consists of votes on 1655 bills in The House of Representatives in years 2009-2010 by 451 representatives. In the matrices the $YES$ votes are represented as $1$, the $NO$ votes as $-1$, and the $F$, $O$ and non-attendance as $0$.
To make the comparison more precise, I will only use the votes in years 1999-2000 for constructing $M_{\mathcal{V}_{BR}}$. This period comprises votes on the \pedror{349} bills by 767 contemporary partisans of 18 parties in the first two years of president Cardoso in power. Also, it is the two year period that gives the higher explained variance by the first two components of the PCA. 

In Fig.~\ref{fig:PCA}, I show the first two components of the PCA for both $M_{\mathcal{V}_{BR}}$ and $M_{\mathcal{V}_{US}}$, where each point represents a congressman and each symbol represents a party. Observe that, for both USA and Brazil the first two components explain a significant part of the variance: $67\%$ and \pedror{$53\%$}, respectively. However, only the USA PCA can visually divide the members of different parties. For Brazil, many members of different (same) parties are located together (apart). This suggests that parties in Brazil are not sufficiently different to justify one of the highest levels of party fragmentation ever found in the world~\cite{Limongi2009,Anckar2000}.

\begin{figure}[!hbt]
\centering
  \subfigure[USA]
            {\includegraphics[width=.45\textwidth]{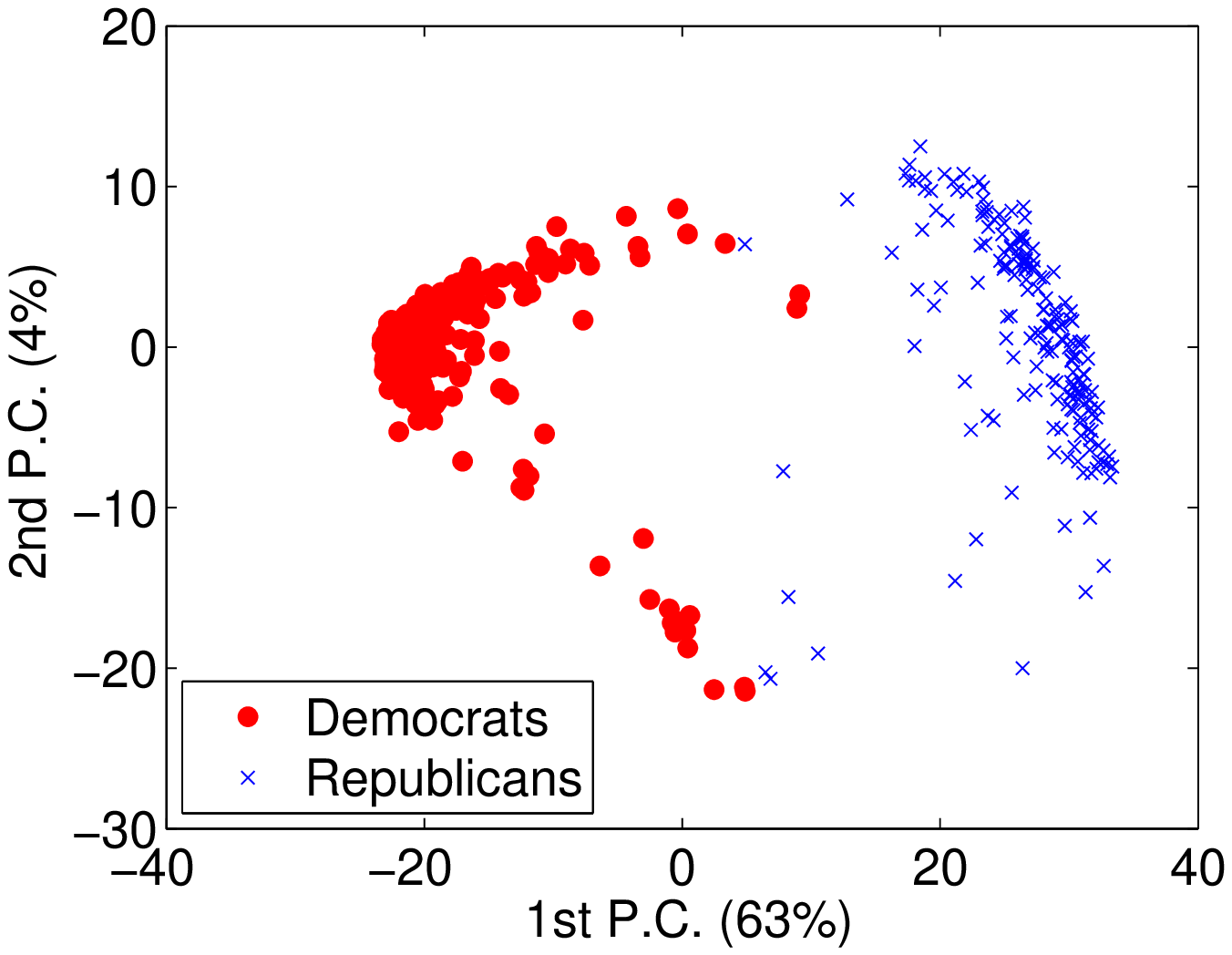}}
	\subfigure[Brazil]
            {\includegraphics[width=.45\textwidth]{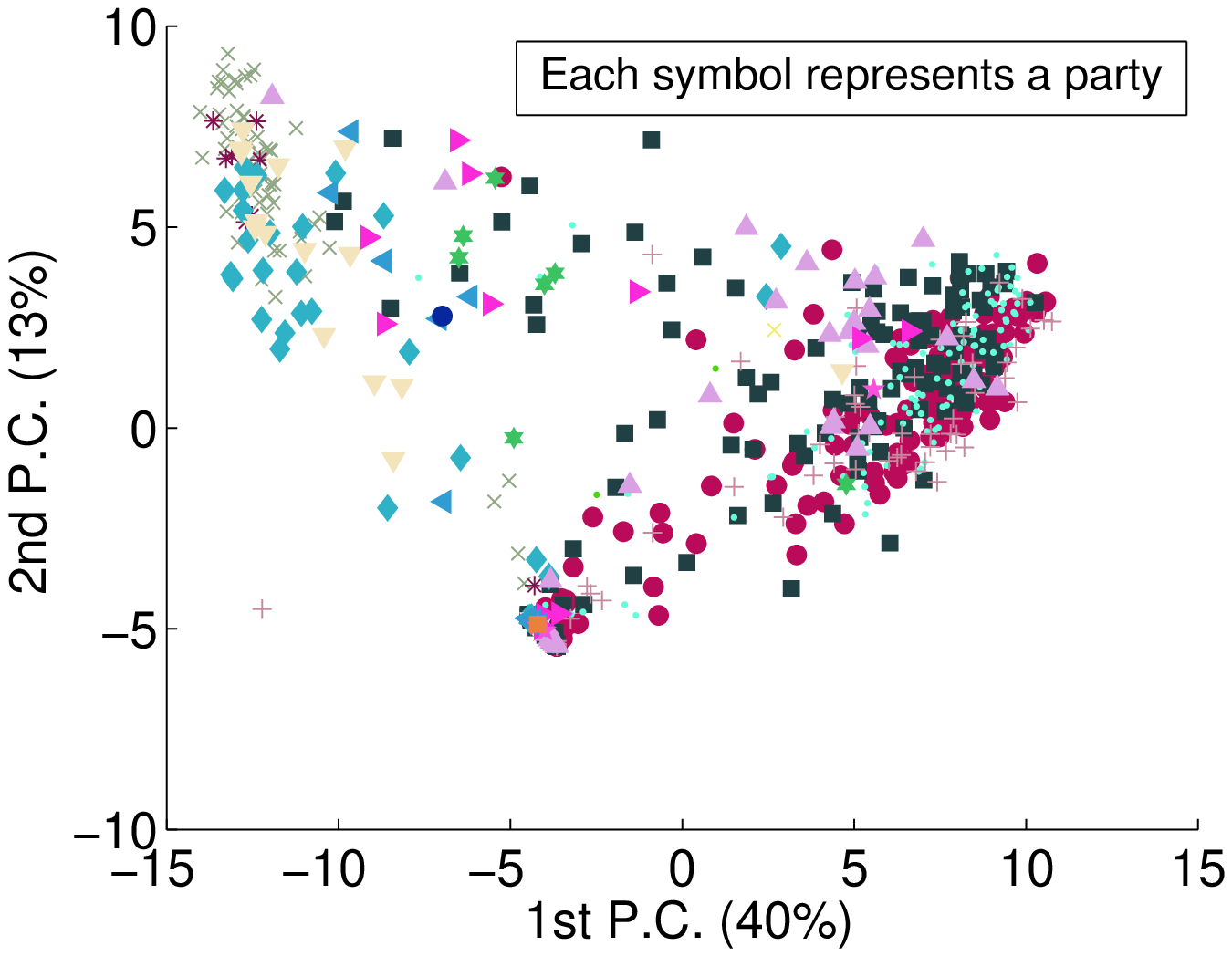}}
  \caption{The first two principal components of the PCA run for partisans' votes in the USA and in Brazil.}
  \label{fig:PCA}
\end{figure}

\section{The ARRANGE Method}
\label{sec:method}

In this section I describe the method \method, which has, basically, two steps. First, based on the votes given by party leaders, it tries to find pairs of parties that can be merged into one. The idea is that parties that always give the same vote could be merged into a single party. Then, \method attempts to assign partisans to new parties with the objective of minimizing the total number of parties receiving partisans and preserving party discipline. Finally, I describe the quality outputs of \method. Formally, the problem tackled by \method is:
 
\begin{problem}
Given a set of parties $\mathcal{P}$, a set of partisans $\mathcal{A}$, a set of bills and amendments $\mathcal{B}$ and the set of all votes given by parties and partisans $\mathcal{V}$, find the minimum set of parties $\mathcal{P}^*$ to which the partisans in $\mathcal{A}$ can be assigned in a way that overall, party and partisan disciplines are maximized. 
\label{problem1}
\end{problem}

\subsection{Merging Political Parties}

In order to assign partisans to other parties, it is necessary to formally define the ways it can be done. Thus, here I define an \textbf{option} as the descriptor of the party that can receive an external partisan as member. More formally, given a partisan $a:=(u,p_i)$, his/her current party $p_i$, and his/her set of propositions $\mathcal{B}_a$, an option $o_a := (a, p_j, sim(a,p_j))$ is a tuple composed by the partisan $a$, a party $p_j \neq p_i$, and the similarity value $sim(a,p_j)$ between $a$ and $p_j$. More importantly, the option $o_a = (a, p_j, sim(a,p_j))$ exists \textit{if and only if} $\mathcal{B}_a \subseteq \mathcal{B}_{p_j}$, i.e., if the party $p_j$ has voted for all propositions in $\mathcal{B}_a$. The set $\mathcal{O}_a=\{o_a^1, o_a^2, ...\}$ is composed by all the options of partisan $a$ or, more formally
\begin{IEEEeqnarray}{rCl}
	\mathcal{O}_a &=& \{[o_a := (a:=(u,p_i), p_j, sim(a,p_j))] ~: \nonumber\\ && p_i,p_j \in \mathcal{P} ~\wedge~ p_j \neq p_i ~\wedge~ \mathcal{B}_a \subseteq \mathcal{B}_{p_j}\}.
\end{IEEEeqnarray}
Moreover, an option $o_a := (a, p_j, sim(a,p_j))$ of partisan $a:=(u,p_i)$ is characterized as a \textit{good option} if $sim(a,p_j) \geq sim(a,p_i)$. The set of good options $\mathcal{O}_a^*$ for partisan $a:=(u,p_i)$ is defined as 
\begin{IEEEeqnarray}{rCl}
\mathcal{O}_a^* &=& \{[o_a = (a:=(u,p_i), p_j, sim(a,p_j))] \in \mathcal{O}_a ~: \nonumber\\ && sim(a,p_j) \geq sim(a,p_i)\} \label{eq:goodoptions}
\end{IEEEeqnarray}

In Fig.~\ref{fig:numberofoptions}a, I show the histogram of the number of good options $|O_a^*|$ for all partisans of our dataset. Consistent with the high levels of party and partisan discipline, observe that the majority of partisans do not have a single good option, i.e., for $69\%$ of the partisans there is no other party in Brazil that offers more similar voting vectors. In fact, only \pedror{$\approx 11\%$} of the partisans have more than three good options. This result suggests that it is very difficult to reduce the number of parties in Brazil by moving partisans from one party to another.

\begin{figure}[!hbt]
\centering
  \subfigure[Original parties]
            {\includegraphics[width=.45\textwidth]{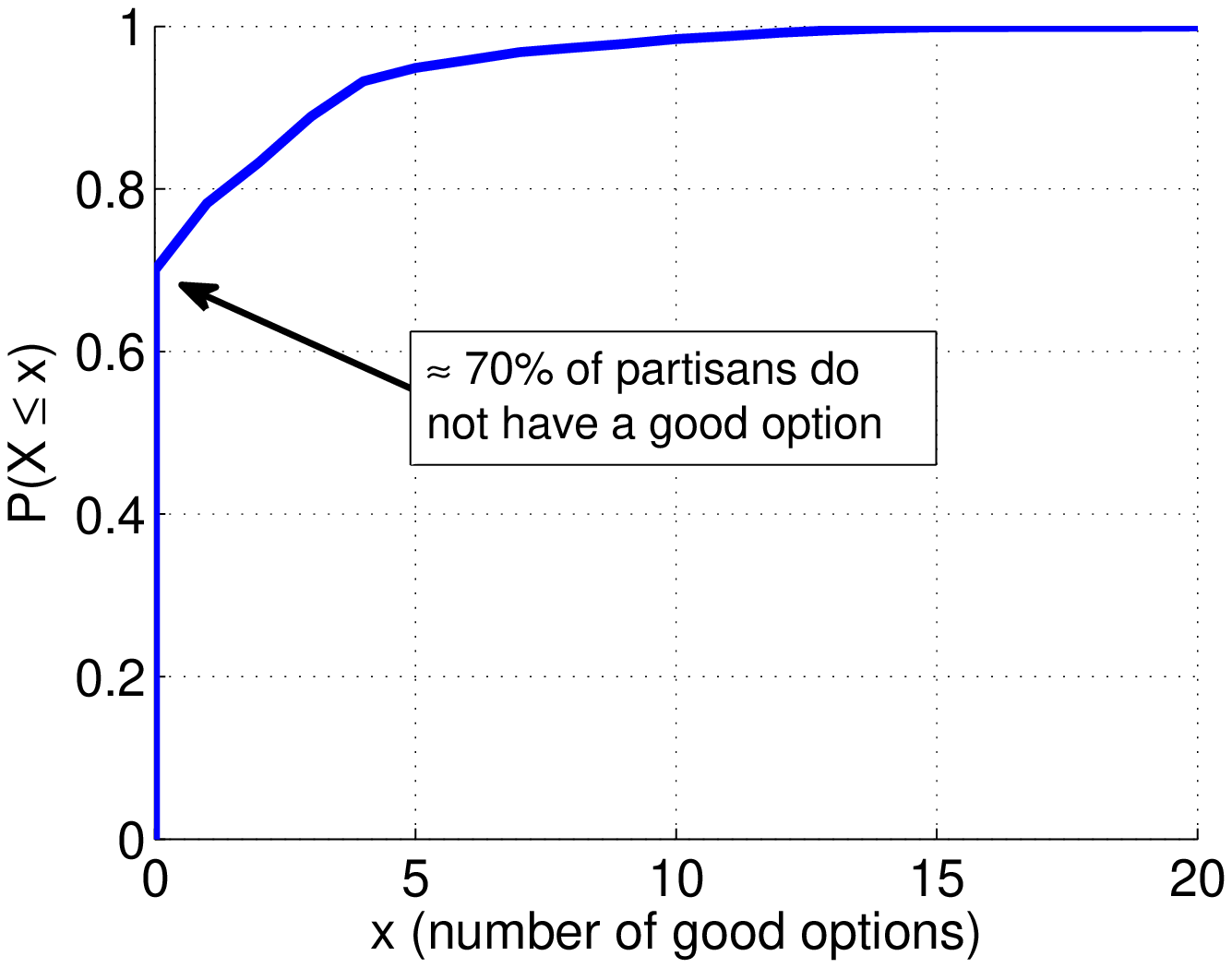}}
	\subfigure[Merged parties]
            {\includegraphics[width=.45\textwidth]{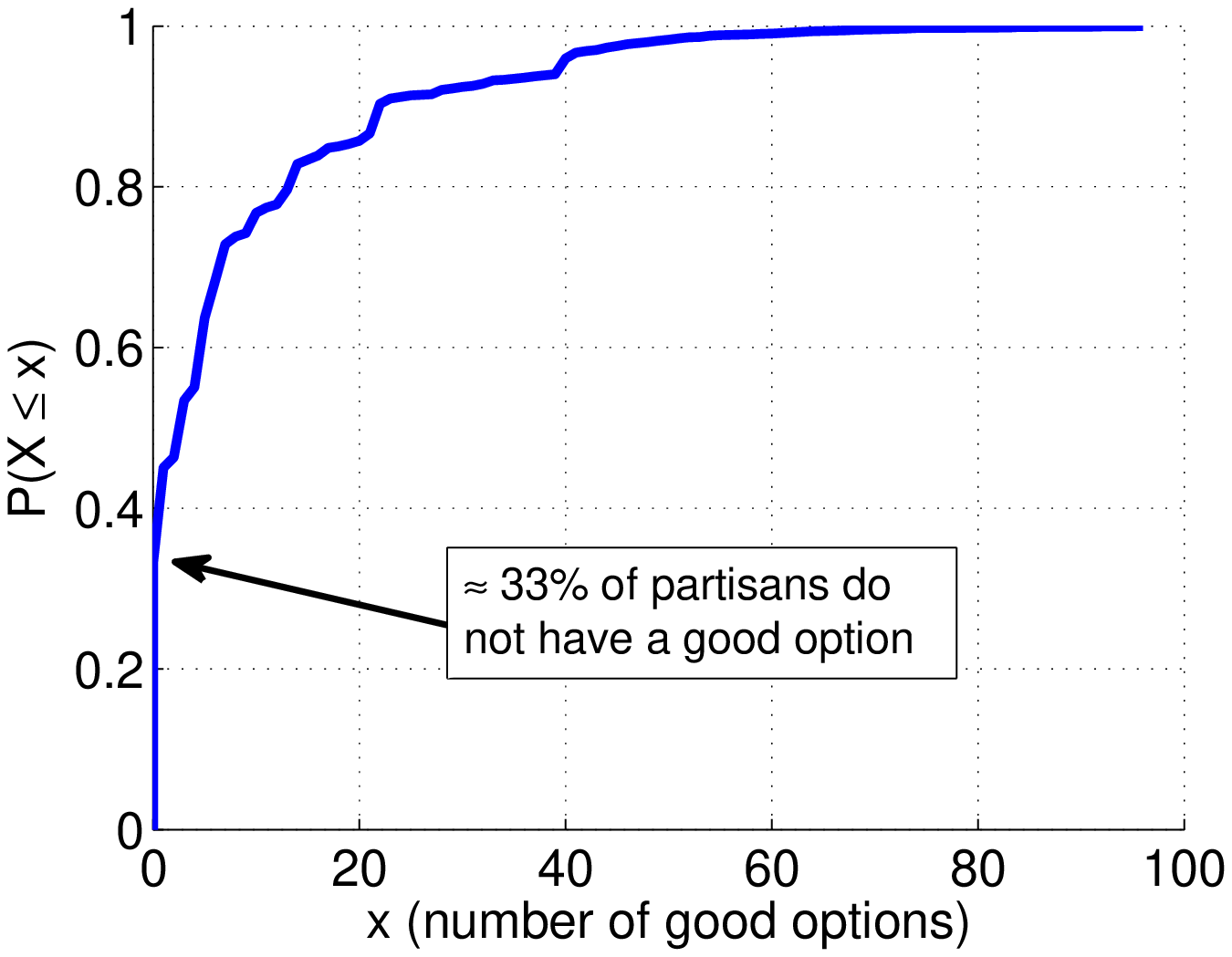}}
  \caption{Number of good options per partisan.}
  \label{fig:numberofoptions}
\end{figure}

The main reason for the low number of good options is related to the short lifetime of many parties in Brazil, as it can be observed in Fig.~\ref{fig:partysize}. Since an option exists if and only if the propositions voted on by the partisan is a subset of the propositions voted on by the parties, many partisans with a long history of votes cannot find options for them among the small parties. Thus, here I propose a method for creating new parties by merging existing ones. The method is based on a simple idea: if two parties are not contemporary or are contemporary, but all votes given by them are equal, then these parties can be merged into a new one.

More formally, two parties $p_i$ and $p_j$ can be merged into a new party $p_{i\_j}$ if one of the two conditions below is met:
\begin{itemize}
	\item \textbf{C1.} $\mathcal{B}_{p_i} \cap \mathcal{B}_{p_j} = \emptyset$, i.e., parties $p_i$ and $p_j$ have not voted for a common proposition.
	\item \textbf{C2.} $\forall b \in \mathcal{B}_{p_i} \cap \mathcal{B}_{p_j}: v_{p_i}^b = v_{p_j}^b$, i.e., parties $p_i$ and $p_j$ have given the same vote for all common propositions.
\end{itemize}

Given these two conditions, the first thing we have to do is find all pairs of parties to which at least one of the two conditions is valid. Once this is done, for every pair of parties $(p,q)$ that can be merged, we create a merged party $p\_q$, for which the sets of propositions and votes are, respectively,  $\mathcal{B}_{p\_q} = \mathcal{B}_{p} \cup \mathcal{B}_{q}$ and $\mathcal{V}_{p\_q} = \mathcal{V}_{p} \cup \mathcal{V}_{q}$. All parties created in this step are put in set $\mathcal{P}'$. After this, we repeat this process by verifying, for each party $p \in \mathcal{P}'$, all parties $q \in \mathcal{P}$ that can be merged to $p$. We merge $p$ to $q$ into $p\_q$ as previously, but this time adding the merged parties to $\mathcal{P}''$. Once this process is done, we copy $\mathcal{P}''$ to $\mathcal{P}'$, empty $\mathcal{P}''$, and restart the process of finding parties $q \in \mathcal{P}$ eligible to be merged to parties $p \in \mathcal{P}'$. The process ends when set $\mathcal{P}''$ does not receive a new merged party. All parties $p \in \mathcal{P}'$ that were not merged are put in the final set of merged parties $p \in \mathcal{P}^M$. This whole process is described in Algorithm~\ref{alg:partymerger}.

\begin{algorithm}
\scriptsize
\caption{Creates a set $\mathcal{P}^M$ of merged parties.}\label{alg:partymerger}
\begin{algorithmic}[1]
\Procedure{merge}{$p,q$}\Comment{two parties to be merged}
\State $\mathcal{B}_{p\_q} \gets \mathcal{B}_p \cup \mathcal{B}_q$
\State $\mathcal{V}_{p\_q} \gets \mathcal{V}_p \cup \mathcal{V}_q$
\Return \textbf{new} \texttt{party} $p\_q$
\EndProcedure
\Statex
\Procedure{Merge All Parties}{$\mathcal{P}$}\Comment{the party set $\mathcal{P}$}
\State $\mathcal{P}^M \gets \{\}$ \Comment{the set of merged parties}
\State $\mathcal{P}' \gets \mathcal{P}$ \Comment{merged parties to investigate}
\While{$\mathcal{P}'\not=\emptyset$} 
\State $\mathcal{P}''= \{\}$ \Comment{new merged parties}
\ForAll{$p \in \mathcal{P}'$}
\State $merged\gets$ \texttt{False}
\ForAll{$q \in \mathcal{P}$}
\If{$(\mathcal{B}_{p} \cap \mathcal{B}_{q} = \emptyset) \vee (\forall b \in \mathcal{B}_{p} \cap \mathcal{B}_{q}: v_{p}^b = v_{q}^b)$ }
\State $p\_q\gets \Call{merge}{p,q}$ 
\State $merged\gets$ \texttt{True}
\State $\mathcal{P}'' \gets \mathcal{P}'' \cup \{p\_q\}$
\EndIf
\EndFor
\If{$merged = $\texttt{False} } \Comment{cannot merge another party to $p$}
\State $\mathcal{P}^M \gets \mathcal{P}^M \cup \{p\}$
\EndIf
\EndFor
\State $\mathcal{P}' \gets \mathcal{P}''$
\EndWhile
\Return $\mathcal{P}^M$
\EndProcedure
\end{algorithmic}
\end{algorithm}

Basically, what Algorithm~\ref{alg:partymerger} does is to find all possible $k-$com\-bi\-na\-tions $\binom{N}{k}$ of set $\mathcal{P}$ for all $1 \leq k \leq N$. It is well known that $\sum_{k=1}^{N} \binom{N}{k} = 2^N-1$, making the worst-case complexity for this problem to be $O(2^N)$. Nevertheless, in practice, finding parties in $\mathcal{P}$ that can be merged with parties in $\mathcal{P}'$ gets significantly harder as $k$ increases, which, in practice, makes Algorithm~\ref{alg:partymerger} computationally feasible for the problem in question. For the case of the 36 Brazilian parties, $k$ went up to 6 and the algorithm stopped, generating a set $\mathcal{P}^M$ containing 95 parties, in which only 5 were not a product of a merge, namely PDT, PMDB, PPS, PSDB and PT. Besides these, Algorithm~\ref{alg:partymerger} generated 12 parties of size 2, 17 parties of size 3, 20 parties of size 4, 31 parties of size 5 and 10 parties of size 6.

In Fig.~\ref{fig:numberofoptions}b, I show the histogram of the number of good options $|O_a^*|$ for all partisans considering the new set of merged parties $\mathcal{P}^M$. Observe that the number of partisans that do not have a single good option dropped from $\approx 69\%$ to $\approx 33\%$, all of them being members of the five parties that were not merged. Moreover, the number of partisans that have more than three good options grew from $\approx 12\%$ to \pedror{$\approx 47\%$}. Thus, I conclude that merging parties significantly raises the chances of reducing the number of parties in Brazil by moving partisans from one party to another without decreasing party discipline.

\subsection{Finding the Minimum Set}

Now that most of the partisans have multiple good options, we can find ways of redistributing them among the parties in $\mathcal{P}^M$. The idea in this redistribution is to find a set of parties $\mathcal{P}^* \subset \mathcal{P}^M$ that has a lower cardinality than $\mathcal{P}$, i.e., $\mathcal{P}^* \subset \mathcal{P}^M$ is a set of parties able to receive all partisans $a \in \mathcal{A}$ as members and the number of parties $N^* = |\mathcal{P}^*|$ in $\mathcal{P}^*$ has to be lower than the actual number of parties $N = |\mathcal{P}|$. However, this has to be done cautiously, since there are two partially conflicting goals:

\begin{enumerate}
	\item Minimize the number of parties;
	\item Maximize party and partisan discipline.
\end{enumerate}

These goals are conflicting because minimizing the number of eligible parties to receive partisans as members implies reducing the options for moving partisans and, as a consequence, the number of good options. If a partisan does not have a good option, he/she is obliged to stay in his/her party in order to not decrease his/her partisan discipline. On the other hand, maximizing the discipline implies in maximizing the size of the set of good options and, therefore,  the number of parties to be considered has to be as high as possible.

In order to solve this conflict, I model this redistribution problem as a \textbf{set cover problem} (\textbf{SCP})~\cite{Chvatal1979}. SCP is a well studied problem for the field of approximation algorithms~\cite{Alon2006}, being also one of Karp's 21 NP-complete problems shown to be NP-complete. In summary, given a set of elements $\{1,2,...,m\}$ (called the universe) and a set $S$ of $n$ sets whose union equals the universe, the SCP is to identify the smallest subset of $S$ whose union equals the universe. More formally, given a universe $\mathcal{X}$ and a family $\mathcal{S}$ of subsets of $\mathcal{X}$, a \textit{cover} is a subfamily $\mathcal{C}\subseteq\mathcal{S}$ of sets whose union is $\mathcal{X}$.

In our case, the universe $\mathcal{X}$ is the set of partisans $\mathcal{A}$ and the family $\mathcal{S}$ of subsets of $\mathcal{X}$ is a family of subsets of partisans $\mathcal{A}_p^M \in \mathcal{A}$ where each subset  $\mathcal{A}_p^M$ is composed by the partisans that are eligible for moving to party $p \in \mathcal{P}^M$. In order to build $\mathcal{A}_p^M$, it is necessary to recalculate the set of options $\mathcal{O}_a$ and good options $\mathcal{O}_a^*$ of each partisan $a$ with respect to the merged parties in $\mathcal{P}^M$. Once this is done, I define that each set $\mathcal{A}_p^M$ is composed by all the partisans $a$ that have a good option $o_a := (a, p, sim(a,p))$, i.e., $\mathcal{A}_p^M = \{a | a \in \mathcal{A} \wedge p \in \mathcal{P}^M \wedge \exists ~ [o_a := (a, p, sim(a,p))] \in \mathcal{O}_a^* \}$.

Since this problem is NP-complete, I recur to a greedy algorithm to solve it. Literature shows that the greedy algorithm is essentially the best possible polynomial time approximation algorithm for set cover under plausible complexity assumptions~\cite{Feige1998}. The greedy algorithm for set covering chooses sets according to one rule: at each stage, choose the set that contains the largest number of uncovered elements. It can be shown~\cite{Chvatal1979} that this algorithm achieves an approximation ratio of $H(s)$, where $s$ is the size of the set to be covered and $H(n)$ is the $n$-th harmonic number: $H(n) = \sum_{k=1}^{n} \frac{1}{k} \le \ln{n} +1$.

For our specific problem, this algorithm works as follows, being described in Algorithm~\ref{alg:greedy}. First, we create two empty sets: $\mathcal{P}^*$, which will contain the final set of parties, and $\mathcal{A}'$, which receives the covered partisans during the process. Then, while $\mathcal{A}'$ does not contain all partisans, we find the party $p \in \mathcal{P}^M$ to which $\mathcal{A}_p^M$ contains the largest number of uncovered partisans. Then, we add all partisans $a \in \mathcal{A}_p^M$ to $\mathcal{A}'$, make $\mathcal{A}_p^M$ an empty set (so it is not selected again), and add $p$ to the final set of parties $\mathcal{P}^*$. 

Note that so far this process \textit{guarantees} that all partisans will, at least, have the same partisan discipline as their actual ones, since only good options are used to build the sets $\mathcal{A}_p^M, p \in \mathcal{P}^M$. Nevertheless, it is possible that relaxing this constraint a little might diminish the total number of parties that compose $\mathcal{P}^*$ considerably. For instance, we may allow a partisan to be member of a party if their similarity is at most $0.05$ smaller than his/her similarity with his/her actual party. This relaxation increases the number of partisans that are eligible to be member of other parties and, therefore, may reduce the size of $\mathcal{P}^*$.

Thus, here I introduce the parameter $\delta$, which is the maximum allowed difference between the actual partisan discipline and the future one. We accommodate this in \method by simply changing the way the set of good options is constructed. With the introduction of $\delta$, the set of good options $\mathcal{O}_a^*$ is defined as:

\begin{IEEEeqnarray}{rCl}
\mathcal{O}_a^* &=& \{[o_a := (a:=(u,p_i), p_j, sim(a,p_j))] \in \mathcal{O}_a ~:~  p_i \in \mathcal{P} ~\wedge\nonumber\\ && p_j \in \mathcal{P}^M ~\wedge~ sim(a,p_j) \geq sim(a,p_i) - \delta\}. \label{eq:goodoptionsdelta}
\end{IEEEeqnarray}

In summary, after generating the set of merged parties $\mathcal{P}^M$ using Algorithm~\ref{alg:partymerger}, it is necessary to create the sets $\mathcal{A}_p^M$ for all $p \in \mathcal{P}^M$ considering $\delta$. These sets $\mathcal{A}_p^M$ will contain all partisans that are eligible for being members of party $p$ given $\delta$. This is done by selecting a value for $\delta$ and then running the procedure \Call{find}{} of Algorithm~\ref{alg:greedy}. If, for instance, $\delta=0$, then no partisans are allowed to decrease their discipline when moving to a different party. On the other side, if $\delta=1$, partisans are allowed to be members of any party that voted for all propositions that they voted on, i.e., partisan discipline is not a constraint. After running Algorithm~\ref{alg:greedy}, we will have a minimum set of parties $\mathcal{P}^*$ able to accommodate all partisans in $\mathcal{A}$. Then, all we have to do is to assign a party $p \in \mathcal{P}^*$ to each partisan $a \in \mathcal{A}$. This is done by selecting the option $o_a = (a:=(u,p_i), p_j, sim(a,p_j)) \in {O}_a^*$ that gives the maximum similarity value $sim(a,p_j)$ for all parties $p_j \in \mathcal{P}^*$ and making $p_j$ the new party of partisan $a$.

\begin{algorithm}
\scriptsize
\caption{Find the minimum set of parties $\mathcal{P}^*$}\label{alg:greedy}
\begin{algorithmic}[1]
\Procedure{buildOptions}{$\mathcal{A}_p^M, \delta$}
\ForAll{$a \in \mathcal{A}$}
\State $\mathcal{O}_a \gets \{[o_a := (a:=(u,p), q, sim(a,q))] ~:~ p \in \mathcal{P} ~\wedge~ q \in \mathcal{P}^M ~\wedge~ \mathcal{B}_a \subseteq \mathcal{B}_{q}\}$
\State $\mathcal{O}_a^* \gets \{[o_a := (a:=(u,p), q, sim(a,q))] \in \mathcal{O}_a ~:~ sim(a,q) \geq sim(a,p)-\delta\}$
\EndFor
\EndProcedure
\Statex
\Procedure{find $\mathcal{P}^*$}{$\mathcal{A}_p^M, \delta$}
\State \Call{buildOptions}{$\mathcal{A}_p^M, \delta$}
\State $\mathcal{A}' \gets \{\}$ \Comment{the set of uncovered partisans}
\State $\mathcal{P}^* \gets \{\}$ \Comment{the final set of parties}
\While{$\mathcal{A}' \not= \mathcal{A}$} 
\State $\mathcal{P}''\gets \{\}$ \Comment{new merged parties}
\State $p \gets \operatorname*{arg\,max}_p |\mathcal{A}_p^M \cup \mathcal{A}|$
\State $\mathcal{A}' \gets \mathcal{A}' \cup \mathcal{A}_p^M$
\State $\mathcal{A}_p^M \gets \{\}$
\State $\mathcal{P}^* \gets \mathcal{P}^* \cup \{p\}$ 
\EndWhile
\Return $\mathcal{P}^*$
\EndProcedure
\end{algorithmic}
\end{algorithm}

\subsection{Quality Signals}

The main goal of \method is to reduce party fragmentation by reducing the number of parties that are able to accommodate all elected partisans. Thus, the main output of \method is $N^*$ (or $|\mathcal{P}^*|$). Nevertheless, it is necessary to assess the quality of the \textit{cover} $\mathcal{P}^*$, since it is essential to reduce party fragmentation by achieving desirable levels of party and partisan discipline. But what are the desirable levels of party and partisan discipline?

In Fig.~\ref{fig:partisandiscipline}a, I showed the CDF of the actual partisan discipline distribution during the analyzed period. The most desirable, or \textit{ideal}, partisan discipline distribution is shown as a dashed red line, which represents the situation where all partisans have discipline of $1.0$. Thus, a new partisan discipline distribution, generated by Algorithm~\ref{alg:greedy} with parameter $\delta$ and defined by the random variable $X_{\delta}$, is considered \textit{desirable} if its CDF $F_{X_{\delta}}(x)$ is closer to the ideal than the actual one, defined by the random variable $X_0$ and CDF $F_{X_0}(x)$. More formally, considering that the area under the ideal CDF curve is $0$ for both partisan and party discipline distributions, I propose the following definition:
\begin{definition}
A new discipline distribution defined by random variable $X_{\delta}$ and its CDF $F_{X_{\delta}}(x)$ is considered a \textit{\textbf{desirable discipline distribution}} if $\int_{0}^{1}F_{X_0}(x)~dx - \int_{0}^{1}F_{X_{\delta}}(x)~dx > 0$, where $F_{X_0}(x)$ is the CDF of the actual discipline distribution defined by random variable $X_0$. \label{def:desirable}
\end{definition}

In other words, if the area under $F_{X_{\delta}}(x)$ is smaller than the area under $F_{X_0}(x)$, then $X_{\delta}$ represents a \textit{desirable discipline distribution}. Moreover, given that discipline ranges from $0$ to $1$, I propose the following lemma:

\begin{lemma}
Given a random variable $X_0$ representing the actual discipline distribution and a random variable $X_{\delta}$ representing a new discipline distribution, if the expected value $E_{X_{\delta}}[x]$ of $X_{\delta}$ is higher than the expected value $E_{X_0}[x]$ of $X_0$, then $X_{\delta}$ represents a \textit{desirable discipline distribution}.
\end{lemma}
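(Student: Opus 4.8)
The plan is to reduce the desirability condition of Definition~\ref{def:desirable} entirely to the hypothesis on expected values, using the standard tail (``layer-cake'') representation of the mean of a nonnegative bounded random variable. The one structural fact I would exploit at the outset is that every discipline value lies in $[0,1]$, as emphasized just before the statement: this makes both $X_0$ and $X_{\delta}$ nonnegative and bounded above by $1$, which is precisely what lets me truncate the tail integral at $1$ and tie the area under the CDF to the expectation.

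Concretely, the first step is to establish, for any random variable $X$ supported on $[0,1]$ with CDF $F_X$, the identity
\[
E_X[x] = \int_0^1 \bigl(1 - F_X(x)\bigr)\,dx = 1 - \int_0^1 F_X(x)\,dx .
\]
This follows from the tail formula $E_X[x] = \int_0^\infty P(X > x)\,dx$ valid for nonnegative $X$, together with the observation that $P(X > x) = 0$ for every $x \geq 1$ when $X \le 1$, so the upper limit may be replaced by $1$. Applying this identity to $X_0$ and to $X_{\delta}$ separately and subtracting gives
\[
E_{X_{\delta}}[x] - E_{X_0}[x] = \int_0^1 F_{X_0}(x)\,dx - \int_0^1 F_{X_{\delta}}(x)\,dx .
\]
The hypothesis $E_{X_{\delta}}[x] > E_{X_0}[x]$ then makes the left-hand side positive, hence the right-hand side is positive, which is exactly the inequality $\int_0^1 F_{X_0}(x)\,dx - \int_0^1 F_{X_{\delta}}(x)\,dx > 0$ defining a desirable discipline distribution. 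The implication is therefore immediate once the identity is in hand (and in fact it is an equivalence, since every step is reversible).

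The main obstacle is really just the careful justification of the tail identity, not the algebra that surrounds it. I would derive $E_X[x] = \int_0^1 (1 - F_X(x))\,dx$ either by integration by parts on the Stieltjes integral $\int_0^1 x\,dF_X(x)$, or by Tonelli's theorem applied to $\int_0^1 \int_0^1 \mathbf{1}[x > t]\,dt\,dF_X(x)$; the boundedness of the support guarantees finiteness and lets me exchange the order of integration without fuss. A minor point to check is the behaviour at the endpoint $x=1$ and the right-continuity convention for $F_X$, but since the discrepancy there has measure zero it does not affect either integral. No appeal to the specifics of how $X_{\delta}$ is produced by Algorithm~\ref{alg:greedy} is needed; the result is a purely distributional fact about random variables on $[0,1]$.
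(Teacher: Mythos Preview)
Your proposal is correct and follows essentially the same route as the paper: invoke the tail formula $E_X[x]=\int_0^\infty(1-F_X(x))\,dx$, truncate to $[0,1]$ using the support assumption to get $\int_0^1 F_X(x)\,dx = 1 - E_X[x]$, and substitute into Definition~\ref{def:desirable}. Your version is just more explicit about justifying the tail identity (Tonelli, endpoint behaviour), but the argument is the same.
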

\begin{proof}
From probability theory, we know that $E_X[x] = \int_{0}^{\infty}1-F_X(x)~dx$ for a given random variable $X$ and CDF $F_X(x)$~\cite{billingsley1986probability}. Since discipline has values from $0$ to $1$, we can write $E_X[x] = 1-\int_{0}^{1}F_X(x)~dx$, or $\int_{0}^{1}F_X(x)~dx = 1-E_X[x]$ for discipline distributions. Then, we can replace Definition~\ref{def:desirable} for $1-E_{X_{\delta}}[x] < 1-E_{X_0}[x]$ or $E_{X_{\delta}}[x] > E_{X_0}[x]$. 
\end{proof}

Thus, now I can formally define three binary quality signals for the new partisan configuration over the new set of parties $\mathcal{P}^*$ generated by Algorithm~\ref{alg:greedy} with parameter $\delta$. These signals indicate, respectively, if the new partisan configuration has desirable levels of partisan, party and overall discipline, and is defined as:

\begin{itemize}
	\item \textbf{$Q_1$}: $1$ if the overall discipline $d_*^{\delta}$ of the new configuration is greater than the overall discipline $d_*^{0}$ of the actual configuration; $0$ otherwise. Recall that the overall discipline $d_*$ is defined in Equation~\ref{eq:overaldiscipline}.
	\item \textbf{$Q_2$}: $1$ if the expected (average) partisan discipline of the new configuration $E_{X_{\delta}}^{\mathcal{A}}[x]$ is greater than the expected party discipline of the actual configuration $E_{X_{0}}^{\mathcal{A}}[x]$ or, in other words, if the new partisan discipline distribution is a \textit{desirable partisan discipline distribution}; $0$ otherwise.		
	\item \textbf{$Q_3$}: $1$ if the expected (average) party discipline of the new configuration $E_{X_{\delta}}^{\mathcal{P}}[x]$ is greater than the expected party discipline of the actual configuration $E_{X_{0}}^{\mathcal{P}}[x]$ or, in other words, if the new party discipline distribution is a \textit{desirable party discipline distribution}; $0$ otherwise.
\end{itemize}

\section{Results}

In this section I show the results for \method for a hundred equally distributed values of $\delta$ between $0$ and $1$. From now on I will call a configuration $c_{\delta}$ the distribution of partisans among parties generated by \method for a particular $\delta$ value. Moreover, in all plots I will indicate whether the quality signals described in previous section were $1$ or $0$. Namely, I will use yellow stars for results where all quality signals were $1$ ($Q_1 \wedge Q_2 \wedge Q_3$), blue diamonds when only signals $Q_2$ and $Q_3$ were $1$ ($Q_2 \wedge Q_3$), red circles when only signal $Q_3$ was $1$ and, finally, green squares when all quality signal were $0$. In summary, \method was able to generate $31$ distinct configurations that, compared with the \statusquo, have (i) a significantly smaller number of parties, (ii) higher discipline of partisans towards their parties and (iii) more even distributions of partisans into parties. Besides comparing with the \statusquo, \method will be compared with two random models: \randomsq and \randomdelta. The competitors can be summarized as:

\begin{enumerate}
	\item \statusquo. It is the existing state of affairs, i.e., the actual and historical situation in Brazil during the analyzed period.
	\item \randomsq. It randomly redistributes the partisans among the existing parties. For each partisan $a \in \mathcal{A}$, the model randomly pick an option $[o_a := (a, p, sim)] \in \mathcal{O}_a$ and assigns $a$ to party $p$. By using the options set $\mathcal{O}_a$ I guarantee that the party $p$ allocated to $a$ has voted for every proposition voted by $a$, i.e., $\mathcal{B}_a \subseteq \mathcal{B}_p$.
	\item \randomdelta. Works in the same way as \randomsq, but instead of allocating partisans to the actual set of parties $\mathcal{P}$, it randomly redistributes the partisans among the minimum set of parties $\mathcal{P}^*$ generated by Algorithm~\ref{alg:greedy}. 
\end{enumerate}

The random models are used to quantify the payoffs obtained by using \method when it generates the minimum party set $\mathcal{P}^*$ (\randomsq) and it efficiently allocates partisans to the parties of this set (\randomdelta). Although not always visible, for all results of both models I also show the $99\%$ confidence interval.

In Fig.~\ref{fig:numberofparties}, I show the number of parties $N^*$ generated by \method for different values of $\delta$. I also show the \statusquo, i.e., the actual number of parties $N$ of which partisans were members during the analyzed period, and the number of parties generated by \randomsq. First, note that the number of parties generated by \method is significantly lower than the \statusquo, decreasing as $\delta$ increases. Even when no partisans are allowed to decrease its discipline ($\delta=0$), $N^*=22$, a number $\approx 39\%$ lower than $36$, the \statusquo. Moreover, \method was able to generate a configuration ($c_{0.15}$) with only \pedror{$13$ parties ($\approx 64\%$ reduction)} and with all quality signals equal to $1$ ($Q_1 \wedge Q_2 \wedge Q_3$), i.e., with overall, party and partisan disciplines greater than the \statusquo. When only $Q_2$ and $Q_3$ are $1$ ($Q_2 \wedge Q_3$), \method could generate a configuration \pedror{($c_{0.19}$) with $N^*=10$ parties, a $\approx 72\%$ reduction}. Finally, when only $Q_3$ is $1$ ($Q_3$), \method could provide a \pedror{$\approx 86\%$ reduction in the number of parties by generating a configuration (e.g. $c_{0.41}$) with only $5$ parties}. It is worth mentioning that the number of parties in Brazil is so excessive that even \randomsq was able to generate a configuration with fewer parties than the \statusquo.

\begin{figure}[!hbt]
\centering
            {\includegraphics[width=.65\textwidth]{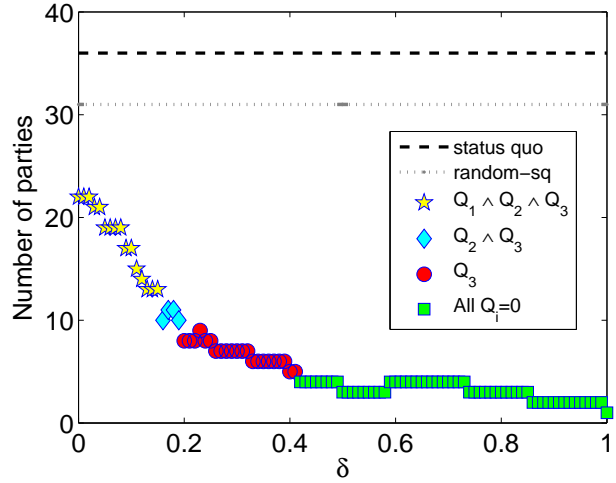}}
  \caption{Number of parties generated by \method.}	  
  \label{fig:numberofparties}
\end{figure}

Concerning discipline, I show in Fig.~\ref{fig:discipline} the overall discipline (Fig.~\ref{fig:discipline}a) and the average party (Fig.~\ref{fig:discipline}b) and partisan (Fig.~\ref{fig:discipline}c) disciplines of the configurations produced by \method and its competitors. Observe that for all discipline metrics the results produced by \method are very similar with the \statusquo, even when all quality signals are $0$. The discipline values decrease significantly only for $\delta$ values close to $1$, when the number of parties generated by \method is $2$ or $1$. It is also interesting to note that the random models are able to produce configurations with considerably high levels of discipline, which shows that Brazilian political parties are, on average, very similar to each other.

\begin{figure*}[!hbt]
\centering
  \subfigure[Overall discipline]
            {\includegraphics[width=.32\textwidth]{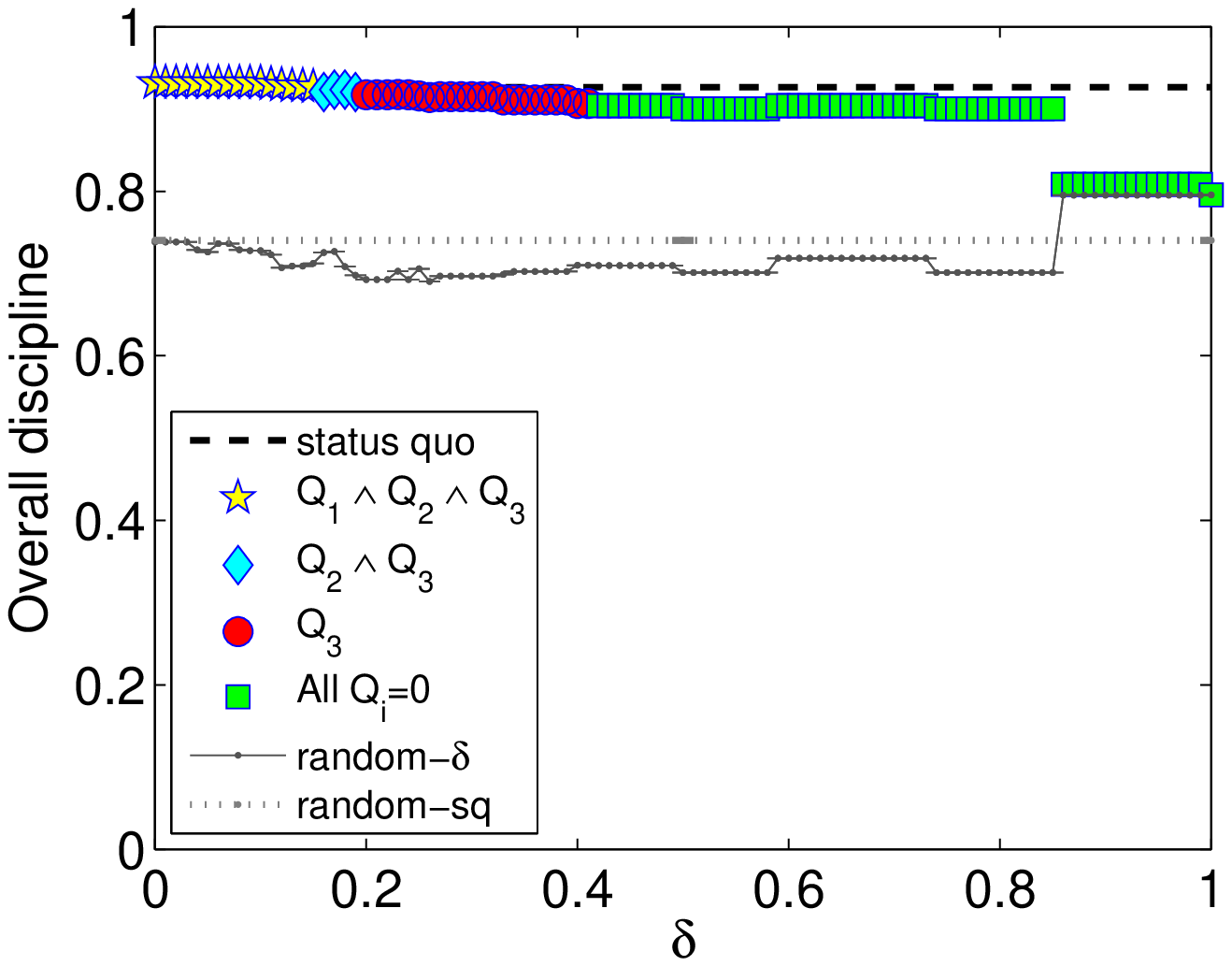}}
  \subfigure[Average party discipline]
            {\includegraphics[width=.32\textwidth]{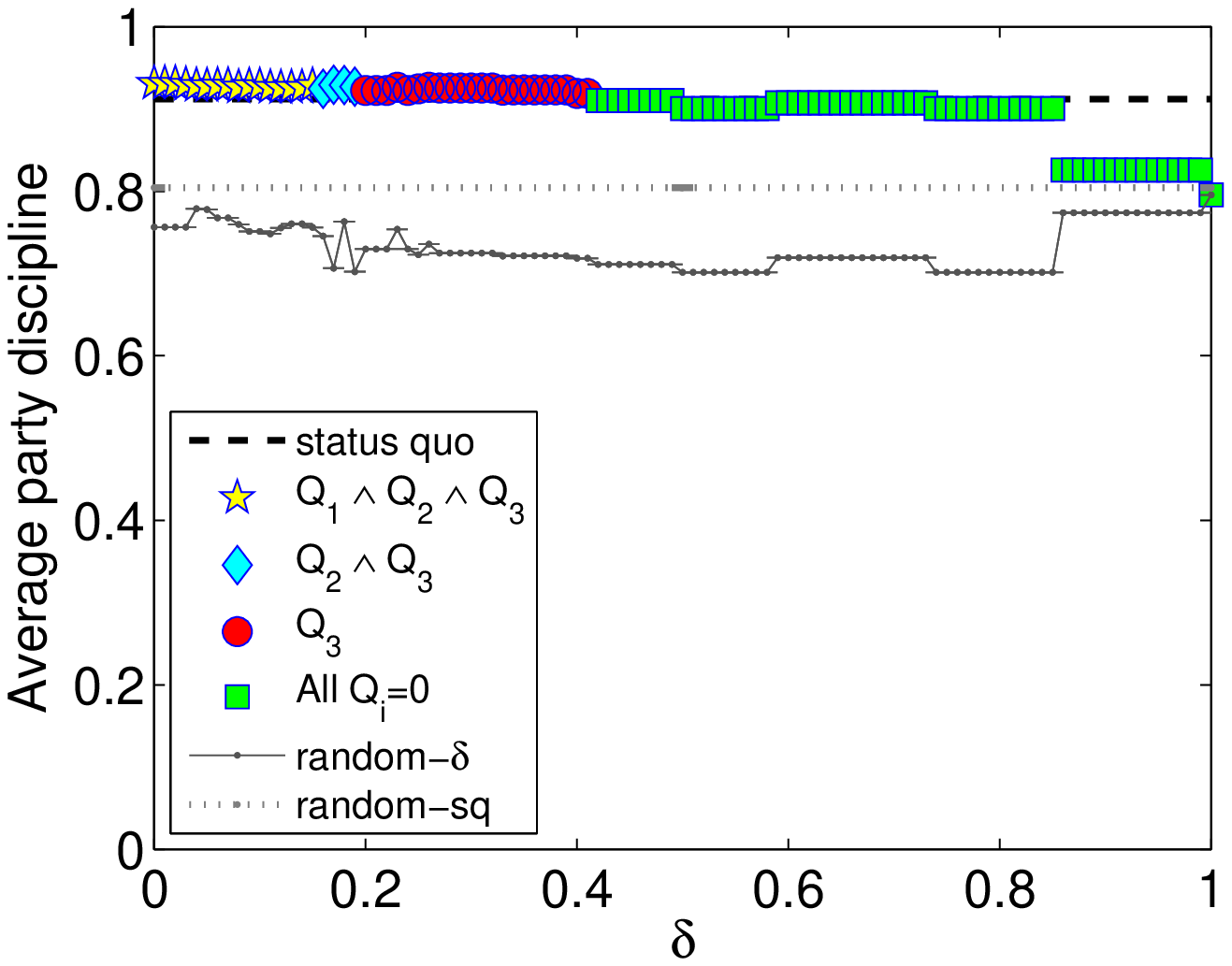}}
  \subfigure[Average partisan discipline]
            {\includegraphics[width=.32\textwidth]{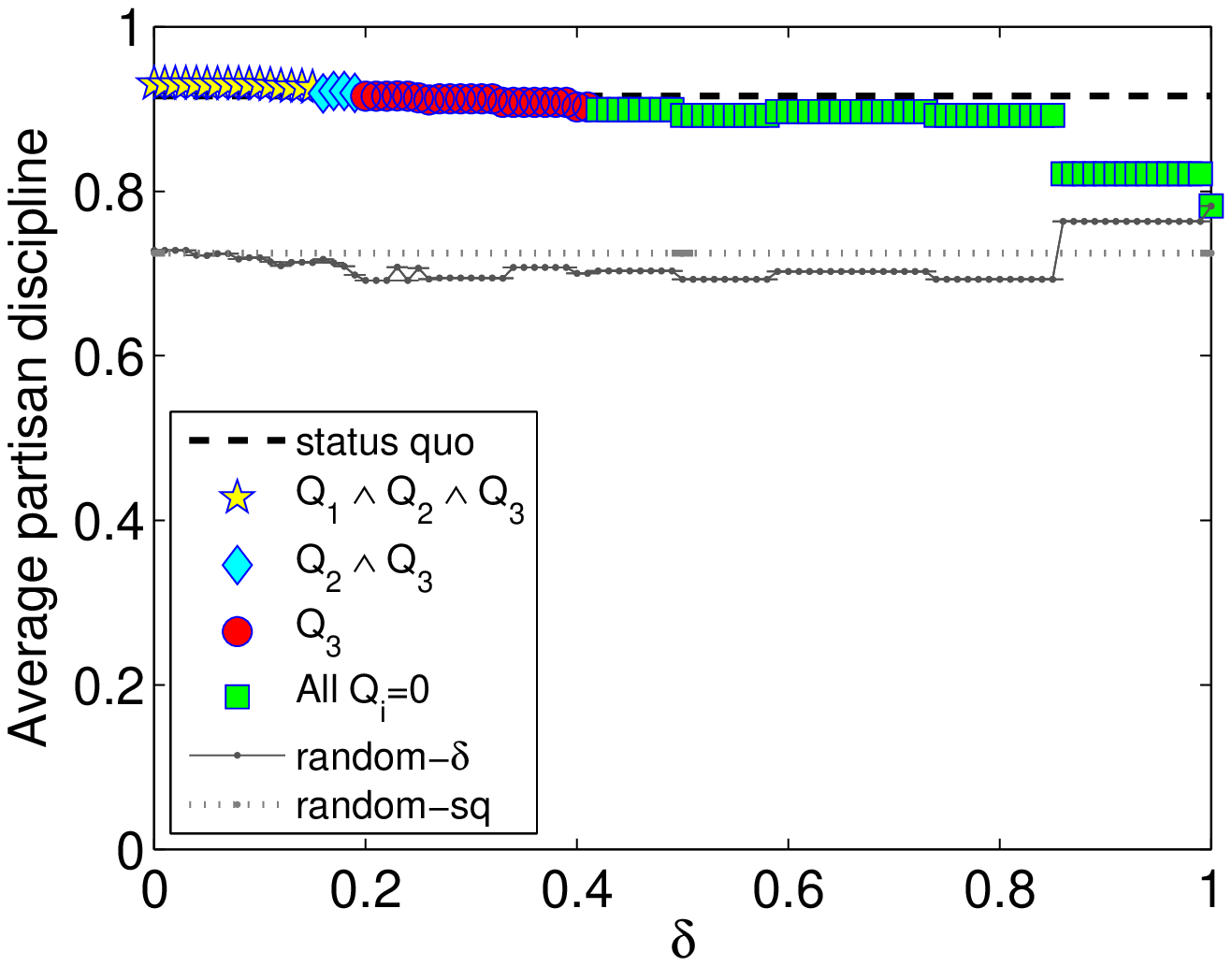}}												
  \caption{Overall and average party and partisan discipline generated by \method for different values of $\delta$ in comparison with the \statusquo and the random models \randomsq and \randomdelta.}  
  \label{fig:discipline}
\end{figure*}

In order to analyze how well distributed are the partisans among parties, we compute the Gini coefficient~\cite{Ceriani:gini:2012} for each configuration generated by \method and its competitors. The Gini coefficient was initially proposed to describe the income inequality in a population~\cite{Ceriani:gini:2012}. It assumes values from $0$, which expresses perfect equality, where all parties have the same number of partisans, to $1$, which expresses maximal inequality among values, where all partisans are allocated to a single party. Observe in Fig.~\ref{fig:ginicoefficient} that \method is able to produce configurations in which the partisans are more evenly distributed than the \statusquo for all values of $\delta$. While the Gini coefficient is $\approx 0.64$ for the \statusquo, it decreased to \pedror{$\approx 0.43$ for $c_{0.15}$} ($Q_1 \wedge Q_2 \wedge Q_3$), to \pedror{$\approx 0.32$ for configuration $c_{0.41}$ ($Q_3$)}, and to $\approx 0$ for \pedror{$c_{0.5}$} (all quality signals equal to $0$, but with discipline values similar with the \statusquo). Model \randomsq has a slightly high Gini coefficient than the \statusquo because a few big parties are more likely to randomly receive new partisans, since they appear as an option in the option sets $\mathcal{O}_a$ for all $a \in \mathcal{A}$.

\begin{figure}[!hbt]
\centering
	{\includegraphics[width=.65\textwidth]{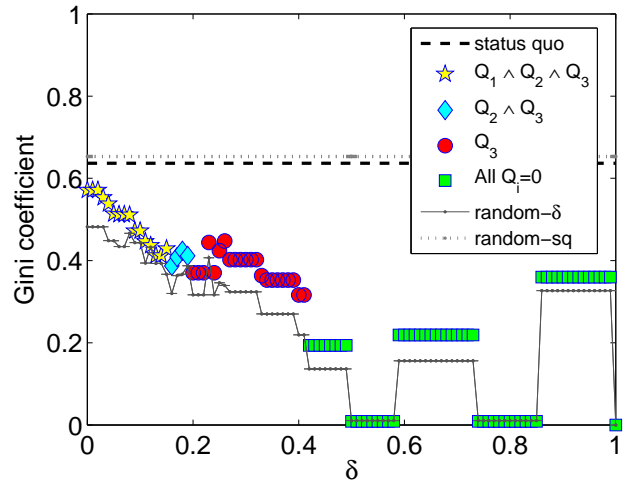}}
  \caption{Gini coefficient of the distribution of partisans among the parties.}
  \label{fig:ginicoefficient}
\end{figure}

Another relevant characteristic to be measured in political party systems is switching, i.e., party changes among partisans. As stated by~\cite{Desposato2006}, switching effectively destroys the meaning of party labels, raises voters' information costs, and eliminates party accountability, being a threat to the very core of democratic representation. Thus, in Fig.~\ref{fig:partychanges}, I plot the number of party changes among partisans that would occur if the configurations generated by \method and the random models were the reality. Observe that in all scenarios generated by \method the number of changes is lower than the \statusquo, \randomsq and \randomdelta. This is another evidence of the importance of reducing party fragmentation to have ideologically well defined parties.

\begin{figure}[!hbt]
\centering
	{\includegraphics[width=.65\textwidth]{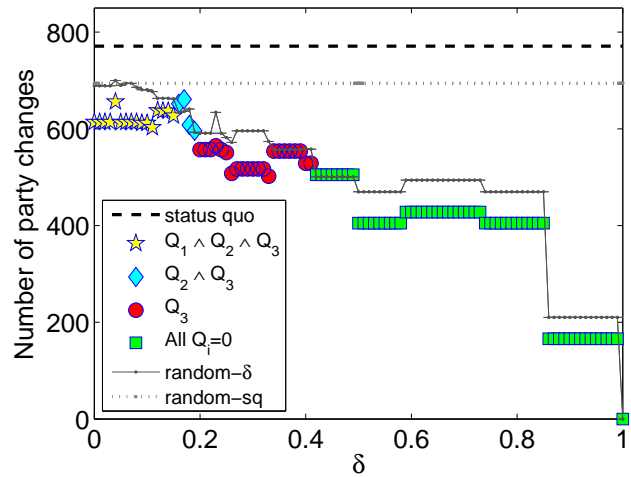}}
  \caption{Total number of party changes among partisans.}
  \label{fig:partychanges}
\end{figure}

Now I will take a closer look at particular configurations generated by \method, namely \pedror{$c_{0.15}$, $c_{0.19}$ and $c_{0.41}$}. These are the configurations that have the lowest number of parties and achieved, respectively, quality signals $Q_1 \wedge Q_2 \wedge Q_3$, $Q_2 \wedge Q_3$ and only $Q_3$. In Fig.~\ref{fig:npartiesperyear}a, I show the number of active parties per year for the \statusquo and these three configurations. First, observe that the number of parties changes constantly over the years in Brazil, this being a harmful consequence of its highly fragmented party system. On the other hand, observe that the configurations generated by \method are (i) much more stable over the years and (ii) have a significantly smaller number of parties than the \statusquo.

\begin{figure}[!hbt]
\centering
  \subfigure[Actual]
            {\includegraphics[width=.45\textwidth]{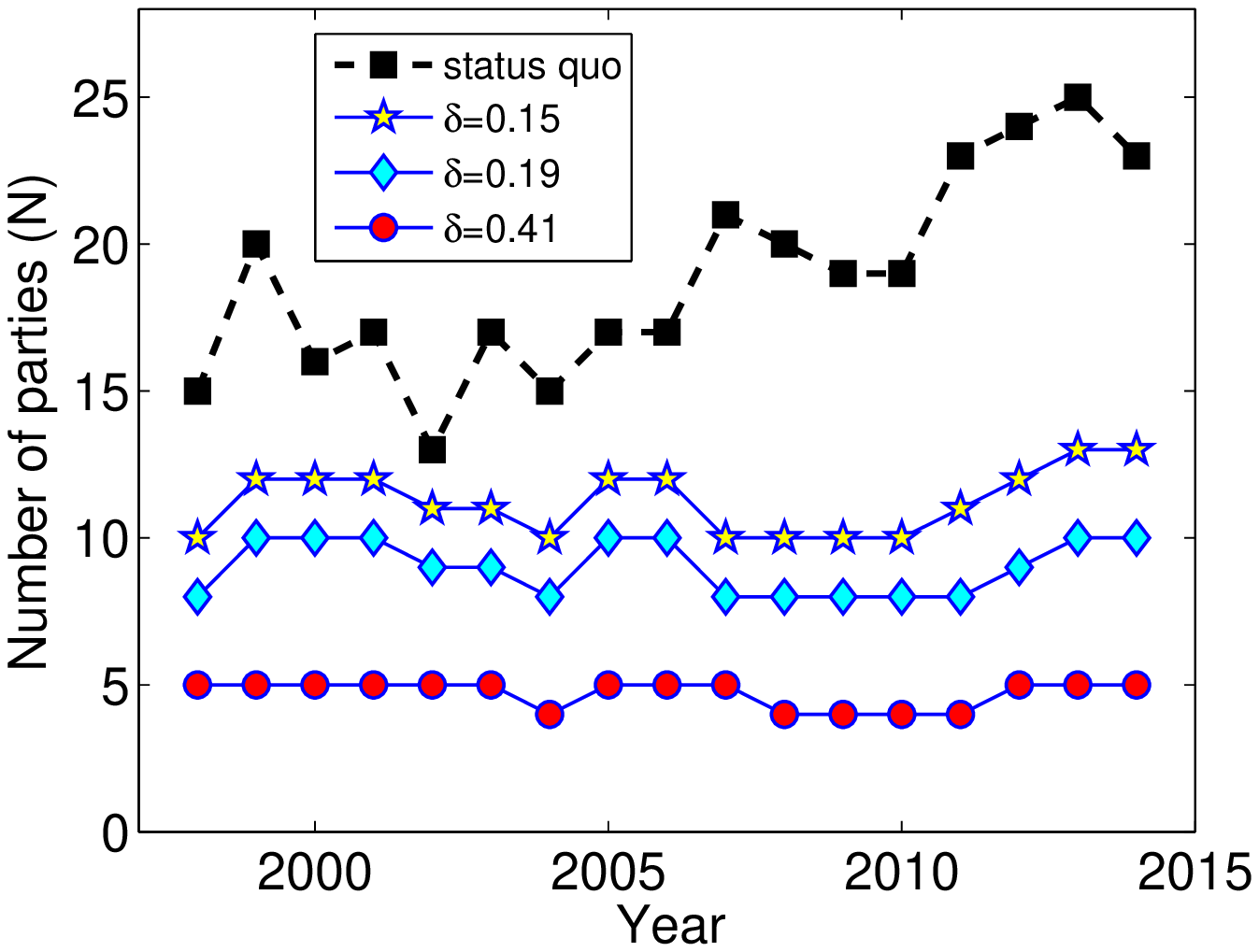}}
	\subfigure[Effective]
            {\includegraphics[width=.45\textwidth]{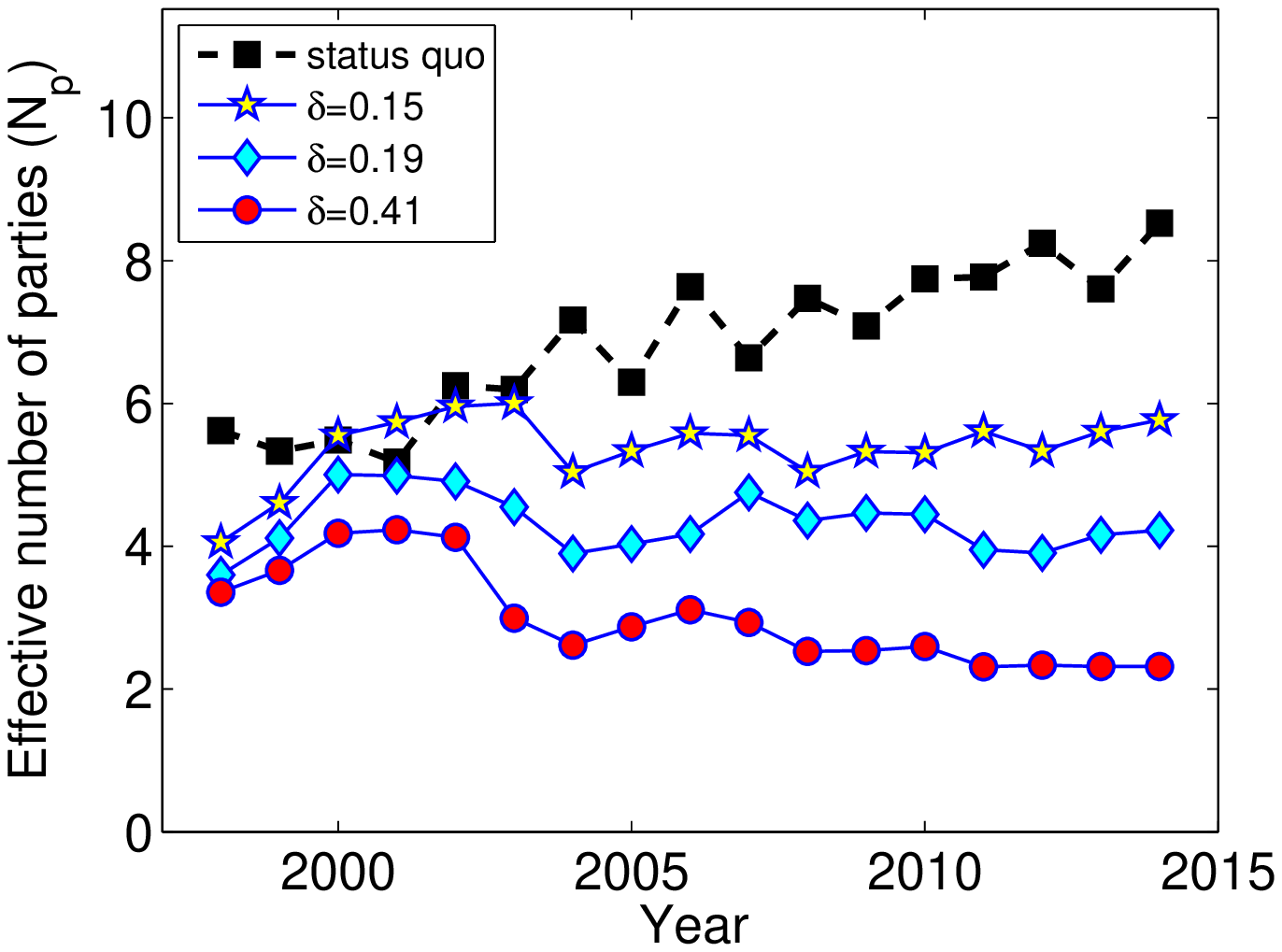}}
\caption{Number of parties per year.}
  \label{fig:npartiesperyear}
\end{figure}

Besides counting the actual number of parties per year, it is also important to measure the \textit {effective number of parties}. As described previously, it is a concept which provides for an adjusted number of political parties in a country's party system, weighting the partisan count per party by their relative strength~\cite{LaaksoTaagepera79}. In our case, the relative strength refers to their seat share in the parliament. This measure is especially useful to detect trends toward fewer or more numerous parties over time~\cite{LaaksoTaagepera79}. The number of parties equals the effective number of parties only when all parties have equal strength. In any other case, the effective number of parties is lower than the actual number of parties. It is also a frequent metric for the fragmentation of a party system~\cite{Anckar2000}. Moreover, although several indexes for computing the effective number of parties exist~\cite{Golosov2009}, in this paper I use the \textit{Golosov} index $N_p=\sum_{1}^{N}{(1+(s_1^2/s_i)-s_i)^{-1}}$, where $N$ is the actual number of parties, $s_i$ is the proportional share of each party $p_i$, and $s_1$ is the highest share of a party~\cite{Golosov2009}. For the best of my knowledge, $N_p$ is the most recent one and its results confirm it works better than earlier proposed alternatives in measuring the effective number of components in highly fragmented and highly concentrated party systems, which is the case of Brazil. 

Thus, in Fig.~\ref{fig:npartiesperyear}b, we show the \textit{effective number of parties} $N_p$ per year, calculated for configurations \pedror{$c_{0.15}$, $c_{0.19}$ and $c_{0.41}$} and for the \statusquo. First, observe that $N_p$ is significantly lower and more stable for \pedror{$c_{0.15}$, $c_{0.19}$ and $c_{0.41}$} than for the \statusquo. While $N_p$ grows constantly after the year $2000$ for the \statusquo, it remains practically constant for the three configurations generated by \method. Moreover, if we consider only \pedror{$c_{0.41}$}, Brazil would go from having one of the most fragmented party systems in the world~\cite{Limongi2009} to having one of the least fragmented~\cite{Anckar2000}, averaging \pedror{$3.0$} effective parties per year.

In Fig.~\ref{fig:PCA}b I showed the first two components of the PCA for the matrix $M_{\mathcal{V}_{BR}}$ composed by the votes of the partisans of Brazil during the years of 1999 and 2000. In Fig.~\ref{fig:PCAnew}, I show this same result, but I replace the \statusquo party labels by the ones generated by \method in \pedror{$c_{0.15}$, $c_{0.19}$ and $c_{0.41}$}. Observe that all three configurations have a visually better clustering of partisans of the same party than the \statusquo. This suggests that \method is also able to provide configurations in which parties are more different among themselves than in the \statusquo. I leave a deeper quantitative analysis for future work.

\begin{figure*}[!hbt]
\centering
  \subfigure[\pedror{$\delta=0.15$}]
            {\includegraphics[width=.32\textwidth]{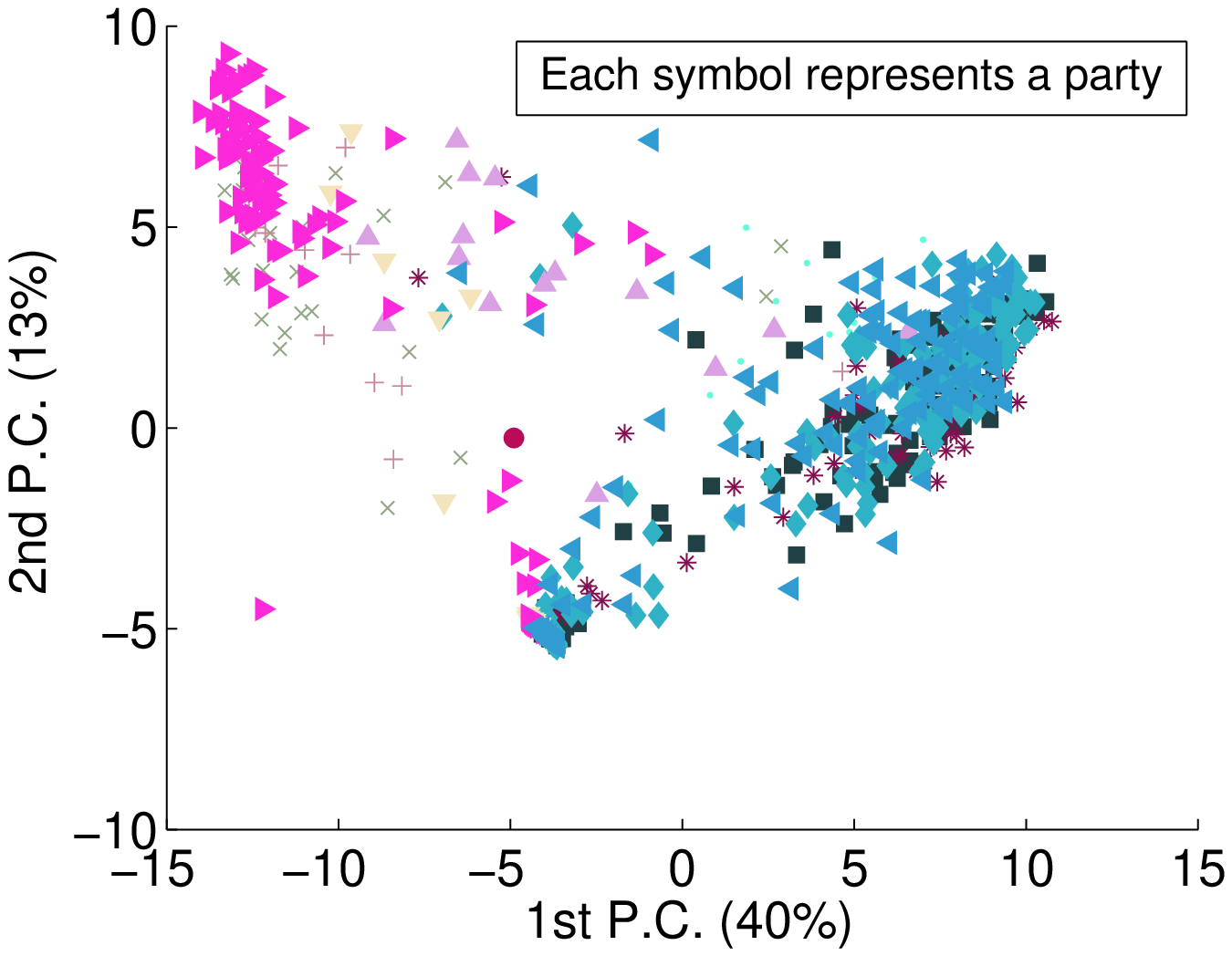}}
  \subfigure[\pedror{$\delta=0.19$}]
            {\includegraphics[width=.32\textwidth]{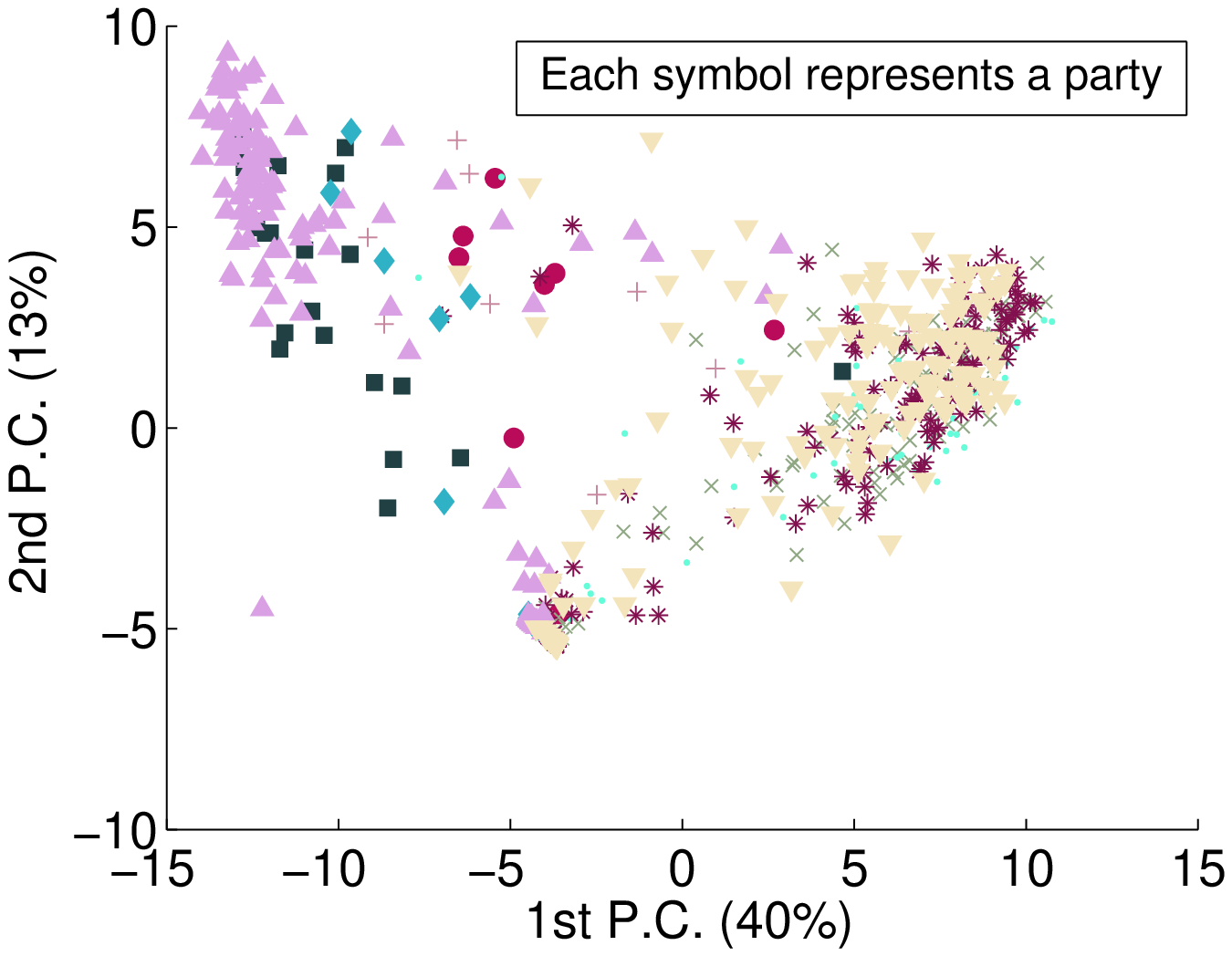}}
  \subfigure[\pedror{$\delta=0.41$}]
            {\includegraphics[width=.32\textwidth]{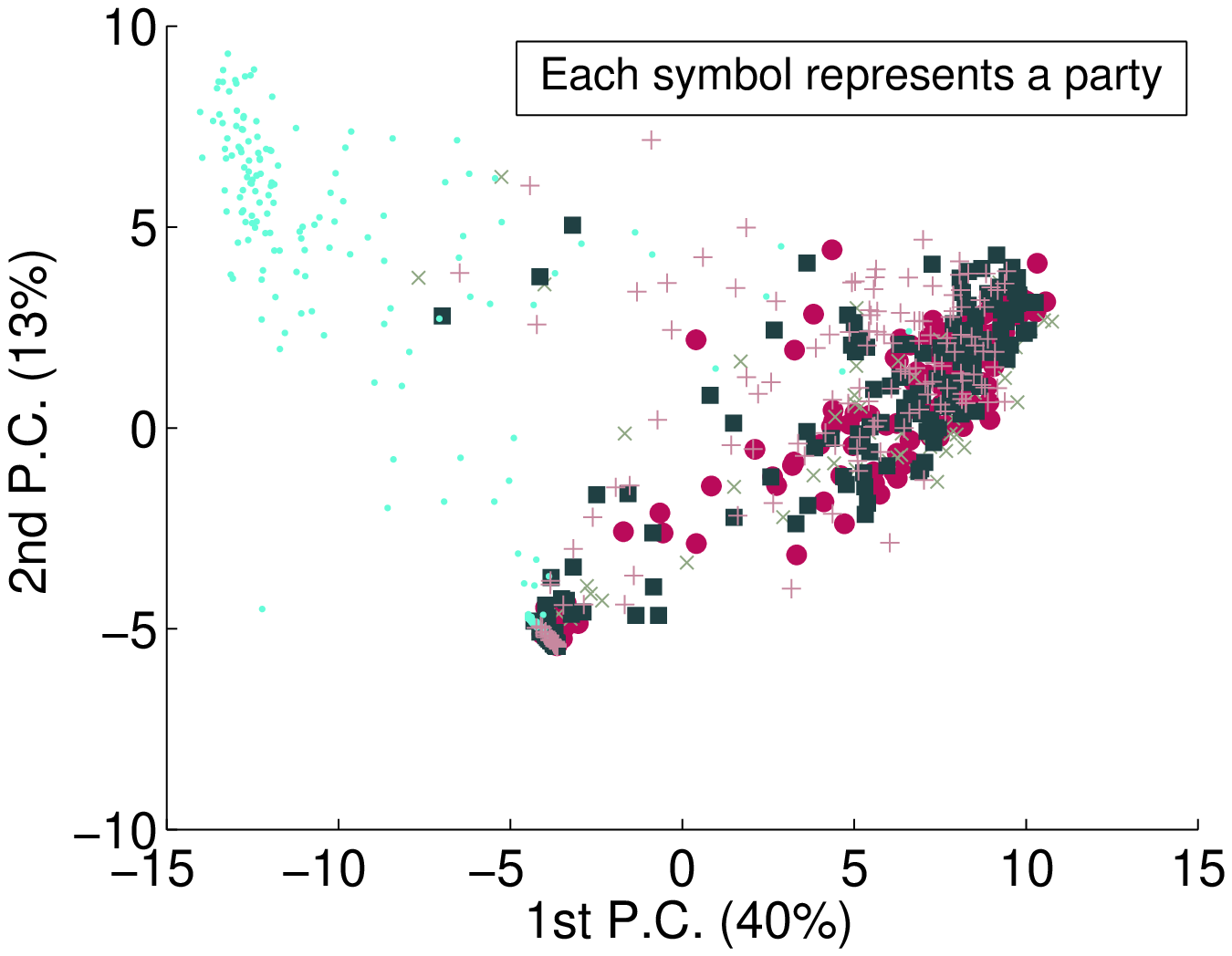}}												
  \caption{The first two principal components of the PCA run for partisans' votes in Brazil considering the redistribution of partisans performed by \method.}
  \label{fig:PCAnew}
\end{figure*}

\section{Discussion}

In practical terms, the main contribution of \method to a government and its population is the ability to provide a quantitative assessment of fragmentation in its party system. Given that the effective number of parties is expected to reflect the number of issue (or ideological) dimensions in a country~\cite{TAAGEPERA2006}, highly fragmented party systems overestimate this number of issue dimensions, providing a distorted view to the population that harms democracy~\cite{ranney1975curing}. Thus, by mapping plenary votes into ideological preferences, \method quantitatively provides an ``ideal'' number of parties for a country given the ideological preferences of its congressmen, revealing the \textit{presumable} true number of issue dimensions that exists in this particular country. I used the expression ``\textit{presumable
} true'' because, as verified by~\cite{Limongi2009}, it is not always true that a disciplined and cohesive party represents an ideological cleavage (or group) in society, i.e., its members may simply be a group of congressmen obeying its leader in order to obtain a particular benefit.

It is also worth mentioning that \method also provides the list of parties that should ideally exist and which partisans should be their members. Although I know a democratic government cannot implement this solution easily, it can be used to support significant reforms in its political system. With this in mind, I could not fail to mention in this paper that one of the possible reasons for Brazil's high level of party fragmentation is the so called \textit{fundo partid\'{a}rio}, which are funds distributed by the federal government to Brazilian political parties for them to spend indiscriminately. A share of the amount paid by the federal government through the \textit{fundo partid\'{a}rio} is the same for every party, but another part is proportional to the number of elected congressmen, senators and governors by each organization. In 2014, \textit{PT}, the party with the highest share ($16.5\%$), received R$\$50,314,999.19$ million \textit{reais} from the fund. On the other hand, \textit{PROS}, the party with the lowest share ($0.16\%$), received R\$493,873.68 thousand \textit{reais} in 2014~\cite{jeb:2014}. Consider that, in 2014, the exchange rate of the \textit{real} varied from $U\$2.19$ to $U\$2.72$ U.S. dollars. It is out of the scope of this paper to point \textit{fundo partid\'{a}rio} as the main culprit for Brazil's high level of party fragmentation, but it poses as a clear incentive for the creation of many parties in Brazil.

In spite of the fact that this work considers Brazil as its use case, the methods and results shown here can be easily replicated to other countries that have highly fragmented party systems. Carsten Anckar studied party system fragmentation in 77 countries~\cite{Anckar2000} and reported high levels of fragmentation for many countries besides Brazil, such as  Bolivia, Bulgaria, Denmark, Ecuador, Finland, France, Guatemala, India, Israel, Italy, Netherlands and Thailand. Nevertheless, it is important to point out that this work was motivated and eased by the Open Data initiative of the Brazilian government, that provides public data related to politics and also to many other areas, such as demographics, government spending, budget and road accidents.

\section{Conclusions}

In this work, I proposed the method \method to assess and reduce fragmentation in multi-party political systems. From roll votes data of partisans and their respective party leaders, \method redistributes the partisans into new parties considering two conflicting objectives: to minimize the number of parties and to maximize party discipline. When applied to Brazilian historical roll call data, \method was able to generate \pedror{$23$} distinct configurations that, compared with the \statusquo, have (i) a significantly smaller number of parties, (ii) higher discipline of partisans towards their parties and (iii) more even distributions of partisans into parties.  These results show that Brazil has and had many redundant parties, i.e., parties that are very similar ideologically. Thus, if today Brazil has one of the highest levels of party system fragmentation in the World~\cite{Limongi2009,Anckar2000}, this work proved it could be much lower. Finally, it is important to point out that \method is a general method and could be directly applied to analyze fragmentation in any of the many highly fragmented party systems that exists in the world~\cite{Anckar2000}.

%
%
%


\begin{thebibliography}{10}

\bibitem{Saad-Filho2013}
Saad-Filho A.
\newblock {Mass Protests under 'Left Neoliberalism': Brazil, June-July 2013}.
\newblock Critical Sociology. 2013 Aug;39(5):657--669.
\newblock Available from:
  \url{http://crs.sagepub.com/cgi/doi/10.1177/0896920513501906}.

\bibitem{Mische2013}
Mische A. {"Come to the streets, but without parties": The challenges of the
  new Brazilian protests}; 2013.
\newblock Available from:
  \url{https://mobilizingideas.wordpress.com/2013/09/04/come-to-the-streets-but-without-parties-the-challenges-of-the-new-brazilian-protests/}.

\bibitem{DOWNIE2014}
Downie A. {Brazil crowds up but stadium usage still problematic}; 2014.
\newblock Available from:
  \url{http://in.reuters.com/article/2014/12/23/soccer-world-brazil-stadiums-idINKBN0K10R920141223}.

\bibitem{mainwaring1999rethinking}
Mainwaring S.
\newblock {Rethinking Party Systems in the Third Wave of Democratization: The
  Case of Brazil}.
\newblock Stanford University Press; 1999.
\newblock Available from:
  \url{http://books.google.com.br/books?id=RLUJ7\_YEacIC}.

\bibitem{mainwaring1997presidentialism}
Mainwaring S, Shugart MS.
\newblock {Presidentialism and Democracy in Latin America}.
\newblock Cambridge Studies in Comparative Politics. Cambridge University
  Press; 1997.
\newblock Available from:
  \url{http://books.google.com.br/books?id=63vEWRrnIMoC}.

\bibitem{ames2009deadlock}
Ames B.
\newblock {The Deadlock of Democracy in Brazil}.
\newblock Interests, Identities, and Institutions in Comparative Politics.
  University of Michigan Press; 2009.
\newblock Available from:
  \url{http://books.google.com.br/books?id=ShaOK5HFBHgC}.

\bibitem{Bohn2014}
Bohn S.
\newblock {Institutional, societal, and economic determinants of party system
  size: evidence from Brazil}.
\newblock International Political Science Review. 2014 Jun;Available from:
  \url{http://ips.sagepub.com/cgi/doi/10.1177/0192512114532968}.

\bibitem{Limongi2009}
Figueiredo AC, Limongi F.
\newblock {Presidential Power, Legislative Organization, and Party Behavior in
  Brazil}.
\newblock Comparative Politics. 2000 Jan;32(2):151.
\newblock Available from:
  \url{http://www.jstor.org/stable/422395?origin=crossref}.

\bibitem{Anckar2000}
Anckar C. {Size and Party System Fragmentation}; 2000.

\bibitem{Colomer2005}
Colomer JM, Tworzecki H. cleavages, issues and parties: a critical overview of
  the literature; 2005.

\bibitem{Neto1997}
Neto OA, Cox GW.
\newblock {Electoral Institutions, Cleavage Structures, and the Number of
  Parties}.
\newblock American Journal of Political Science. 1997;41(1):pp. 149--174.
\newblock Available from: \url{http://www.jstor.org/stable/2111712}.

\bibitem{TAAGEPERA2006}
TAAGEPERA R, GROFMAN B.
\newblock {Rethinking Duverger's Law: Predicting the Effective Number of
  Parties in Plurality and PR Systems - Parties Minus Issues Equals One*}.
\newblock European Journal of Political Research. 2006 May;13(4):341--352.
\newblock Available from:
  \url{http://doi.wiley.com/10.1111/j.1475-6765.1985.tb00130.x}.

\bibitem{duverger1963political}
Duverger M.
\newblock {Political Parties: Their Organization and Activity in the Modern
  State}.
\newblock Science editions. Wiley; 1963.
\newblock Available from:
  \url{http://books.google.com.br/books?id=KhcOAAAAQAAJ}.

\bibitem{lipset1967party}
Lipset SM, Rokkan S.
\newblock {Party systems and voter alignments: cross-national perspectives}.
\newblock International yearbook of political behavior research. Free Press;
  1967.
\newblock Available from:
  \url{http://books.google.com.br/books?id=EpNCAAAAIAAJ}.

\bibitem{Stoll2007}
Stoll H. {Social Cleavages and the Number of Parties: How the Measures You
  Choose Affect the Answers You Get}; 
\newblock \pedrob{Comparative Political Studies.} 2007
\pedro{
\bibitem{Wu2014a}
Wu C, Rosenfeld R, Clermont G.
\newblock {Using Data-Driven Rules to Predict Mortality in Severe Community
  Acquired Pneumonia}.
\newblock PLoS ONE. 2014 Apr;9(4):e89053.
\newblock Available from:
  \url{http://dx.plos.org/10.1371/journal.pone.0089053}.
}
\pedro{
\bibitem{Silva2014c}
Silva TH, {Vaz de Melo} POS, Almeida JM, Salles J, Loureiro AAF.
\newblock {Revealing the City That We Cannot See}.
\newblock ACM Transactions on Internet Technology. 2014 Dec;14(4):1--23.
\newblock Available from:
  \url{http://dl.acm.org/citation.cfm?doid=2699996.2677208}.
}
\pedro{
\bibitem{Silva2014b}
Silva TH, Vaz~de~Melo POS, Almeida J, Musolesi M, Loureiro A.
\newblock {You are What you Eat (and Drink): Identifying Cultural Boundaries by
  Analyzing Food \& Drink Habits in Foursquare}.
\newblock In: Proceedings of 8th AAAI Intl. Conf. on Weblogs and Social Media
  (ICWSM 2014); 2014. 
\newblock Available from: \url{http://arxiv.org/abs/1404.1009}.
}
\pedro{
\bibitem{ICWSM1510570}
Park M, Weber I, Naaman M, Vieweg S. {Understanding Musical Diversity via
  Online Social Media}; 2015.
\newblock \pedrob{In: Proceedings of 9th AAAI Intl. Conf. on Weblogs and Social Media
  (ICWSM 2015);} 2015.
\newblock Available from:
  \url{https://www.aaai.org/ocs/index.php/ICWSM/ICWSM15/paper/view/10570}.
}
\pedro{
\bibitem{Zhang2011}
Zhang J, Wang FY, Wang K, Lin WH, Xu X, Chen C.
\newblock {Data-Driven Intelligent Transportation Systems: A Survey}.
\newblock IEEE Transactions on Intelligent Transportation Systems. 2011
  Dec;12(4):1624--1639.
\newblock Available from:
  \url{http://ieeexplore.ieee.org/lpdocs/epic03/wrapper.htm?arnumber=5959985}.
}
\pedro{
\bibitem{Yin2014}
Yin S, Ding SX, Xie X, Luo H.
\newblock {A Review on Basic Data-Driven Approaches for Industrial Process
  Monitoring}.
\newblock IEEE Transactions on Industrial Electronics. 2014
  Nov;61(11):6418--6428.
\newblock Available from:
  \url{http://ieeexplore.ieee.org/lpdocs/epic03/wrapper.htm?arnumber=6717991}.
}
\pedro{
\bibitem{DeVries2014}
de~Vries NJ, Carlson J, Moscato P.
\newblock {A Data-Driven Approach to Reverse Engineering Customer Engagement
  Models: Towards Functional Constructs}.
\newblock PLoS ONE. 2014 Jul;9(7):e102768.
\newblock Available from:
  \url{http://dx.plos.org/10.1371/journal.pone.0102768}.
}
\bibitem{Shor2011}
Shor B, McCarty NM.
\newblock {The Ideological Mapping of American Legislatures}.
\newblock American Political Science Review. 2011;105:530--551.

\bibitem{poole2011ideology}
Poole KT, Rosenthal HL.
\newblock {Ideology and Congress}.
\newblock American Studies. Transaction Publishers; 2011.
\newblock Available from:
  \url{https://books.google.com.br/books?id=eOuZqa7UbqUC}.

\bibitem{Clinton2004}
Clinton JD, Jackman S, Rivers D.
\newblock {The Statistical Analysis of Roll Call Data}.
\newblock American Political Science Review. 2004;98:355--370.
\newblock Available from:
  \url{http://journals.cambridge.org/abstract\_S0003055404001194}.

\bibitem{Sobkowicz2012}
Sobkowicz P, Kaschesky M, Bouchard G.
\newblock {Opinion mining in social media: Modeling, simulating, and
  forecasting political opinions in the web}.
\newblock Government Information Quarterly. 2012;29:470--479.

\bibitem{Pennacchiotti2011}
Pennacchiotti M, Popescu AM.
\newblock {Democrats, republicans and starbucks afficionados}.
\newblock In: Proceedings of the 17th ACM SIGKDD international conference on
  Knowledge discovery and data mining - KDD '11; 2011. p. 430.
\newblock Available from:
  \url{http://dl.acm.org/citation.cfm?doid=2020408.2020477}.

\bibitem{ICWSM148073}
Akoglu L. {Quantifying Political Polarity Based on Bipartite Opinion Networks};
  2014.
\newblock Available from:
  \url{https://aaai.org/ocs/index.php/ICWSM/ICWSM14/paper/view/8073/8100}.
\pedro{
\bibitem{Gao2015a}
Gao ZK, Yang YX, Fang PC, Jin ND, Xia CY, Hu LD.
\newblock {Multi-frequency complex network from time series for uncovering
  oil-water flow structure}.
\newblock Scientific Reports. 2015 Feb;5:8222.
\newblock Available from:
  \url{http://www.nature.com/doifinder/10.1038/srep08222}.
}
\pedro{
\bibitem{Newman2011}
Newman MEJ.
\newblock {Communities, modules and large-scale structure in networks}.
\newblock Nature Physics. 2011 Dec;8(1):25--31.
\newblock Available from:
  \url{http://www.nature.com/doifinder/10.1038/nphys2162}.
}
\pedro{
\bibitem{Lazer2014}
Lazer D, Kennedy R, King G, Vespignani A.
\newblock {The Parable of Google Flu: Traps in Big Data Analysis}.
\newblock Science. 2014 Mar;343(6176):1203--1205.
\newblock Available from:
  \url{http://www.sciencemag.org/cgi/doi/10.1126/science.1248506}.
}
\bibitem{Poole1991}
Poole KT, Rosenthal H.
\newblock {Patterns of Congressional Voting}.
\newblock American Journal of Political Science. 1991;35(1):228--278.

\bibitem{Snyder2001}
Snyder JM, Groseclose T.
\newblock {Estimating Party Influence on Roll Call Voting: Regression
  Coefficients versus Classification Success}.
\newblock American Political Science Review. 2001;95:689--698.

\bibitem{Clinton2003}
Clinton JD, Meirowitz A.
\newblock {Integrating Voting Theory and Roll Call Analysis: A Framework}.
\newblock Political Analysis. 2003;11:381--396.

\bibitem{Fang2012}
Fang Y, Si L, Somasundaram N, Yu Z.
\newblock {Mining Contrastive Opinions on Political Texts using
  Cross-Perspective Topic Model Categories and Subject Descriptors}.
\newblock Proceedings of the fifth ACM international conference on Web search
  and data mining. 2012;p. 63--72.

\bibitem{Riker76thenumber}
Riker WH.
\newblock {The Number of Political Parties: A Reexamination of Duverger's Law}.
\newblock Comparative Politics. 1976;9(October):93--106.

\bibitem{Samuels2008}
Samuels D.
\newblock {Sources of Mass Partisanship in Brazil}.
\newblock Latin American Politics and Society. 2008 Dec;48(2):1--27.
\newblock Available from:
  \url{http://doi.wiley.com/10.1111/j.1548-2456.2006.tb00345.x}.

\bibitem{LaaksoTaagepera79}
Laakso M, Taagepera R.
\newblock {``Effective'' Number of Parties: A Measure with Application to West
  Europe}.
\newblock Comparative Political Studies. 1979;12:3--27.

\bibitem{Molinar1991}
Molinar J.
\newblock {Counting the Number of Parties: An Alternative Index}.
\newblock The American Political Science Review. 1991 Dec;85(4):1383.
\newblock Available from:
  \url{http://www.jstor.org/stable/1963951?origin=crossref}.

\bibitem{Dunleavy2003}
Dunleavy P, Boucek F.
\newblock {Constructing the Number of Parties}.
\newblock Party Politics. 2003 May;9(3):291--315.
\newblock Available from:
  \url{http://ppq.sagepub.com/cgi/doi/10.1177/1354068803009003002}.

\bibitem{Golosov2009}
Golosov GV.
\newblock {The Effective Number of Parties: A New Approach}.
\newblock Party Politics. 2009 Sep;16(2):171--192.
\newblock Available from:
  \url{http://ppq.sagepub.com/cgi/doi/10.1177/1354068809339538}.

\bibitem{Raile2010}
Raile ED, Pereira C, Power TJ.
\newblock {The Executive Toolbox: Building Legislative Support in a Multiparty
  Presidential Regime}.
\newblock Political Research Quarterly. 2010 Mar;64(2):323--334.
\newblock Available from:
  \url{http://prq.sagepub.com/cgi/doi/10.1177/1065912909355711}.

\bibitem{Desposato2006}
Desposato SW.
\newblock {Parties for Rent? Ambition, Ideology, and Party Switching in
  Brazil's Chamber of Deputies}.
\newblock American Journal of Political Science. 2006 Jan;50(1):62--80.
\newblock Available from:
  \url{http://doi.wiley.com/10.1111/j.1540-5907.2006.00170.x}.
\pedro{
\bibitem{Fortunato2010}
Fortunato S.
\newblock {Community detection in graphs}.
\newblock Physics Reports. 2010 Feb;486(3-5):75--174.
\newblock Available from:
  \url{http://linkinghub.elsevier.com/retrieve/pii/S0370157309002841}.
}
\pedro{
\bibitem{Newman2004}
Newman MEJ, Girvan M.
\newblock {Finding and evaluating community structure in networks}.
\newblock Physical Review E. 2004 Feb;69(2):026113.
\newblock Available from:
  \url{http://link.aps.org/doi/10.1103/PhysRevE.69.026113}.
}
\pedro{
\bibitem{Aldecoa2011}
Aldecoa R, Mar\'{\i}n I.
\newblock {Deciphering Network Community Structure by Surprise}.
\newblock PLoS ONE. 2011 Sep;6(9):e24195.
\newblock Available from:
  \url{http://dx.plos.org/10.1371/journal.pone.0024195}.
}
\bibitem{ranney1975curing}
Ranney A.
\newblock {Curing the Mischiefs of Faction: Party Reform in America}.
\newblock Jefferson memorial lectures. University of California Press; 1975.
\newblock Available from:
  \url{http://books.google.com.br/books?id=6mjBAVjaVQcC}.

\bibitem{pca}
Jolliffe IT.
\newblock {Principal Component Analysis}.
\newblock 2nd ed. Springer; 2002.

\bibitem{Chvatal1979}
Chvatal V.
\newblock {A Greedy Heuristic for the Set-Covering Problem}.
\newblock Mathematics of Operations Research. 1979;4(3):pp. 233--235.
\newblock Available from: \url{http://www.jstor.org/stable/3689577}.

\bibitem{Alon2006}
Alon N, Moshkovitz D, Safra S.
\newblock {Algorithmic construction of sets for k -restrictions}.
\newblock ACM Transactions on Algorithms. 2006 Apr;2(2):153--177.
\newblock Available from:
  \url{http://portal.acm.org/citation.cfm?doid=1150334.1150336}.

\bibitem{Feige1998}
Feige U.
\newblock {A threshold of ln n for approximating set cover}.
\newblock Journal of the ACM. 1998 Jul;45(4):634--652.
\newblock Available from:
  \url{http://portal.acm.org/citation.cfm?doid=285055.285059}.

\bibitem{billingsley1986probability}
Billingsley P.
\newblock {Probability and measure}.
\newblock Wiley series in probability and mathematical statistics. Wiley; 1986.
\newblock Available from:
  \url{http://books.google.com.br/books?id=Q2IPAQAAMAAJ}.

\bibitem{Ceriani:gini:2012}
Ceriani L, Verme P.
\newblock {The origins of the Gini index: extracts from \pedrob{Variabilit\`{a} e
  Mutabilit\`{a}} (1912) by Corrado Gini}.
\newblock Journal of Economic Inequality. 2012;10(3):421--443.

\bibitem{jeb:2014}
\pedrob{Justi\c{c}a Eleitoral do Brasil.} {Distribui\c{c}\~{a}o do Fundo Partid\'{a}rio}; 2014.
\newblock Available from:
  \url{http://www.justicaeleitoral.jus.br/arquivos/tse-distribuicao-do-fundo-partidario-duodecimos-2014}.

\end{thebibliography}
\end{document}